\newcommand{\PP}{\mathbb{P}}
\newcommand{\E}{\mathbb{E}}
\newtheorem{theorem}{Theorem}
\newtheorem{lemma}{Lemma}
\newtheorem{remark}{Remark}
\newtheorem{proposition}{Proposition}
\begin{document}
\title{Low-Complexity Random Rotation-based Schemes for Intelligent Reflecting Surfaces}

\author{Constantinos Psomas, \IEEEmembership{Senior Member, IEEE}, and Ioannis Krikidis, \IEEEmembership{Fellow, IEEE}

\thanks{C. Psomas and I. Krikidis are with the Department of Electrical and Computer Engineering, University of Cyprus, Nicosia, Cyprus (e-mail: \{psomas, krikidis\}@ucy.ac.cy). Preliminary results of this work have been presented at the IEEE International Symposium on Personal, Indoor and Mobile Radio Communications 2020, London, UK \cite{PIMRC}.

This work was co-funded by the European Regional Development Fund and the Republic of Cyprus through the Research and Innovation Foundation, under the projects POST-DOC/0916/0256 (IMPULSE) and INFRASTRUCTURES/1216/0017 (IRIDA). It has also received funding from the European Research Council (ERC) under the European Union's Horizon 2020 research and innovation programme (Grant agreement No. 819819).}}

\maketitle

\begin{abstract}
The employment of intelligent reflecting surfaces (IRSs) is a potential and promising solution to increase the spectral and energy efficiency of wireless communication networks. Despite their many advantages, IRS-aided communications have limitations as they are subject to high propagation losses. To overcome this, the phase rotation (shift) at each element needs to be designed in such a way as to increase the channel gain at the destination. However, this increases the system's complexity as well as its power consumption. In this paper, we present an analytical framework for the performance of random rotation-based IRS-aided communications. Under this framework, we propose four low-complexity and energy efficient schemes, based on a coding or a selection approach. Both of these approaches employ random phase rotations and require limited knowledge of channel state information. Specifically, the coding-based schemes use time-varying random phase rotations to produce an equivalent time-varying channel. On the other hand, the selection-based schemes select a partition of the IRS elements based on the received signal power at the destination. Analytical expressions for the achieved outage probability and energy efficiency of each scheme are derived. It is demonstrated that all schemes can provide significant performance gains as well as full diversity order.
\end{abstract}

\begin{IEEEkeywords}
Intelligent reflecting surfaces, random rotations, outage probability, energy efficiency, selection, diversity.
\end{IEEEkeywords}

\section{Introduction}
Despite the fact that the 5G-era has commenced, with its deployment in some countries, the challenge of how to connect billions of devices and satisfy their rate requirements still exists. Furthermore, the energy efficiency of such highly dense and highly connected wireless communication networks is another vital requirement of particular interest \cite{ZHANG}. A promising new technology which aims to address these issues is the so-called intelligent reflecting surfaces (IRSs), also known as reconfigurable intelligent surfaces \cite{EB,CH2}. An IRS consists of an array of passive elements embedded in a flat metasurface, where each element is reconfigurable and can alter the phase of the incident signal with the help of a dedicated controller \cite{CL}. Thus, through these software-controlled reflections of the signals, a smart and programmable wireless environment can be achieved \cite{MDR}. Their employment can provide many benefits such as extend the range of wireless communication systems, improve the spectral efficiency by means of their full-duplex operation as well as increase energy efficiency due to the passive operation of their elements.

As a result, IRS-aided communications has recently attracted substantial attention by the research community and the industry and has already been investigated under various different communication scenarios \cite{EMIL, RUI, CH, DING, QN, HANZO, EB2, RUI2, HANZO2, CHU, RUI3, HAN, SAAD}. Specifically, in \cite{RUI}, the authors study a single cell wireless system where a multi-antenna access point (AP) communicates with multiple users via an IRS; it is shown that the joint optimization of the active beamforming from the AP and the passive beamforming from the IRS can provide performance gains. A similar scenario is considered in \cite{QN}, where it is demonstrated that IRSs can outperform both half- and full-duplex amplify-and-forward relays. The implementation of two index modulation schemes, space shift keying and spatial modulation, in IRS-aided communications is studied in \cite{EB2}. It is shown that good spectral efficiency performance can be achieved even for low signal-to-noise ratio (SNR) values. The authors in \cite{DING}, consider IRS-assisted non-orthogonal multiple access (NOMA) communications and it is demonstrated that NOMA can benefit from the employment of IRSs. An upper bound for the ergodic spectral efficiency of an IRS-assisted system is evaluated in \cite{HAN}, and an optimal phase shift design is proposed to maximize the ergodic spectral efficiency. Moreover, the benefits from the employment of IRSs, in terms of physical layer security, are shown in \cite{CHU} and \cite{RUI3}. In particular, the work in \cite{CHU} designs the AP's transmit beamforming and the IRS's reflect beamforming, such that the transmit power is minimized subject to a secrecy rate constraint. On the other hand, in \cite{RUI3}, the authors jointly optimize the AP's transmit beamforming and the IRS's reflect beamforming in order to maximize the secrecy rate.

The energy efficiency of IRS deployments is investigated in \cite{CH}, where the proposed resource allocation methods achieved up to $300\%$ higher energy efficiency compared to the conventional multi-antenna amplify-and-forward relaying. On the other hand, compared to the conventional decode-and-forward relay system, an IRS achieves higher energy efficiency only when high data rates are required \cite{EMIL}. A stochastic geometry model with IRSs is presented in \cite{HANZO}, where the spatial randomness of users is taken into account. The derived analytical framework for the spectral and energy efficiency of the proposed model validates the gains from the employment of multiple IRSs. The uplink data rate in an IRS system is considered in \cite{SAAD}, where an asymptotic analysis is undertaken with imperfect channel estimation and correlated interference; it is shown that noise and interference from channel estimation errors become negligible as the number of elements increases. The implementation of IRSs has also been considered in the context of simultaneous wireless information and power transfer \cite{RUI2,HANZO2}. Specifically, the work in \cite{RUI2}, considers an IRS-aided wireless system with multiple information receivers and energy harvesters. By maximizing the weighted sum-power at the energy harvesters, it is demonstrated that IRS can enhance the performance. A similar scenario is considered in \cite{HANZO2}, where by maximizing the weighted sum-rate of the information receivers under certain energy harvesting constraints, it is shown that the existence of an IRS benefits the network.

Most of the aforementioned works, mainly focus on optimizing the incident signal's phase shifts at the IRS and assume knowledge of the channel state information (CSI). However, this corresponds to higher complexity and power consumption. Moreover, channel estimation in IRS-aided networks is challenging but can also be impractical in some cases, due to the limited resources of an IRS \cite{NADEEM}. Towards this direction, some efforts have been made, e.g. based on a minimum mean squared error (MMSE) approach \cite{NADEEM}, on deep reinforcement learning \cite{CH3} and on parallel factor decomposition \cite{WEI}. Motivated by this, in this paper, we present an analytical framework for the performance of random rotation-based IRS-aided communications. We propose four low-complexity and energy efficient techniques based on two approaches: a coding-based and a selection-based approach. Both approaches depend on random phase rotations and either do not require CSI (coding-based schemes) or have low CSI requirements (selection-based schemes) at the source. Specifically, the contribution of this work is threefold:
\begin{itemize}
\item A complete analytical framework for the performance of random rotation-based IRS schemes is presented. Building on this framework, we propose four low-complexity and well-connected schemes for IRS-aided communications. We derive analytical expressions for each scheme with respect to the outage probability and the energy efficiency. Furthermore, a diversity analysis is undertaken, where we provide the achieved diversity order and coding gain. Finally, we provide a detailed discussion on how these schemes can be implemented and how they compare with the coherent (beamforming) case in terms of perfect/imperfect CSI. Our results demonstrate that the proposed schemes can enhance the performance of IRS-aided communication systems both in terms of outage and energy efficiency.
\item We propose a random rotations coding-based (RRC) scheme, inspired by the rotate-and-forward protocol \cite{SHENG}, which produces an equivalent time-varying channel through time-varying random rotations. We show that RRC can achieve significant performance gains over a small number of channel uses, and provides full diversity order. Furthermore, we present a coding-based one-bit feedback (OBF) scheme, which adjusts the phase shift at each element according to a one-bit returned by the destination during a training period. It is demonstrated that, even for a short training period, the OBF scheme can improve the performance. We show that as the training period increases, the algorithm converges to the beamforming case.
\item Two selection-based schemes are proposed, which select a partition (sub-surface) of the IRS elements at each time slot based on the received signal power at the destination. In particular, the transmit diversity (TD) selects the sub-surface, which provides the highest achieved SNR at the destination. It is shown that the TD scheme provides full spatial diversity order and can substantially increase the energy efficiency. On the other hand, the adaptive transmit diversity (ATD) scheme selects a sub-surface, which achieves an SNR higher than a certain threshold. The ATD scheme is of lower complexity compared to the TD but can still achieve full diversity order and improve the performance.
\end{itemize}

The rest of this paper is organized as follows: Section \ref{sys_model} describes the considered system model and presents our main assumptions. In Section \ref{cbs} and Section \ref{sbs}, we present the proposed coding-based and selection-based schemes, respectively, together with their analytical expressions. Section \ref{comp} provides a discussion on the implementation issues of the proposed schemes together with a comparison. Numerical results are provided in Section \ref{numerical} and the paper concludes with Section \ref{conclusion}.

\textit{Notation}: Lower and upper case boldface letters denote vectors and matrices, respectively; $[\cdot]^\top$ denotes the transpose operator; $\Im\{z\}$ returns the imaginary part of $z$ and $\jmath = \sqrt{-1}$ denotes the imaginary unit; $\PP\{X\}$ and $\E\{X\}$ represent the probability and the expectation of $X$, respectively; $\mathds{1}_{X}$ is the indicator function of $X$ with $\mathds{1}_{X} = 1$ if $X$ is true and $\mathds{1}_{X} = 0$ otherwise; $K_M(\cdot)$ is the modified Bessel function of the second kind of order $M$, $\Gamma(\cdot)$ denotes the complete gamma function, $\log(\cdot)$ is the natural logarithm, and $\binom{n}{k} = \frac{n!}{(n-k)!k!}$ is the binomial coefficient.

\section{System Model}\label{sys_model}
Consider an IRS-aided network, where a source $S$ achieves communication with a destination $D$ through the employment of an IRS with $M$ reflecting elements, as shown in Fig. \ref{fig1}. The source and destination are equipped with a single antenna\footnote{The single antenna case refers to a low-complexity scenario and does not limit the contribution of the proposed framework; the multi-antenna case is left for future consideration.} \cite{EMIL} and a direct link between them is not available (e.g. due to high path-loss or heavy shadowing) \cite{QN,CH}. Assume a channel coherence period of duration $T$ (measured in channel uses). Then, a codeword
\begin{align}
\mathbf{x} \triangleq [x_1,x_2,\dots,x_T], x_t \in \mathbb{C}, 1\leq t\leq T,
\end{align}
is transmitted by the source over $T$ symbols time \cite{SHENG}. All wireless links are assumed to exhibit Rayleigh fading\footnote{Even though we consider Rayleigh fading, the proposed analytical framework is general and the extension to other fading models is straightforward as we need only to consider their probability distributions.} \cite{RUI,QN,EB2}; we define by $h_i$ and $g_i$ the fading coefficients from $S$ to the $i$-th IRS element and from the $i$-th IRS element to $D$, respectively. The fading coefficients remain constant during the $T$ transmissions but change independently every $T$ channel uses according to a circularly symmetric complex Gaussian distribution, i.e. $h_i \sim \mathcal{CN}(0,\sigma_h^2)$ and $g_i \sim \mathcal{CN}(0,\sigma_g^2)$; the variances capture both large- and small-scale fading effects. For the sake of simplicity, we define $\sigma^2 \triangleq \sigma_h^2\sigma_g^2$.

We assume that instantaneous knowledge of CSI at the source does not exist. At every time instant $t$, each element of the IRS, randomly rotates (shifts) the phase of the incident signal. Denote by
\begin{align}
  \mathbf{\Phi}_t = \text{diag}[\beta_1\exp(\jmath\phi_{t,1}) ~ \beta_2\exp(\jmath\phi_{t,2}) ~ \cdots ~ \beta_M\exp(\jmath\phi_{t,M})],
\end{align}
the diagonal matrix, where $\beta_i \in [0,1]$ is the reflection amplitude at the $i$-th IRS element and $\phi_{t,i}$ is the random phase shift, which is uniformly distributed in $[0,2\pi)$; unless otherwise stated, we consider $\beta_i = 1$, $\forall ~i$. Therefore, if the source transmits with a constant power $P$, the received signal at the destination $D$ at the $t$-th channel use can be written as
\begin{align}
r_t = \sqrt{P}\mathbf{h}^\top\mathbf{\Phi}_t\mathbf{g}x_t + n_t,
\end{align}
where $\mathbf{h} = [h_1 ~ h_2 ~ \cdots ~ h_M]^\top$, $\mathbf{g} = [g_1 ~ g_2 ~ \cdots ~ g_M]^\top$, and $n_t \sim \mathcal{CN}(0,\sigma_n^2)$ is the additive white Gaussian noise with variance $\sigma_n^2$. Then, the instantaneous SNR at the destination $D$ over the $t$-th transmission is
\begin{align}
\gamma_t = \frac{P}{\sigma_n^2} H_t,
\end{align}
where
\begin{align}
H_t = \Bigg\lvert \sum_{i=1}^M h_i g_i \exp(\jmath\phi_{t,i})\Bigg\rvert^2,
\end{align}
is the channel gain from the $M$ elements of the IRS.

\begin{figure}[t]\centering
  \includegraphics[width=0.8\linewidth]{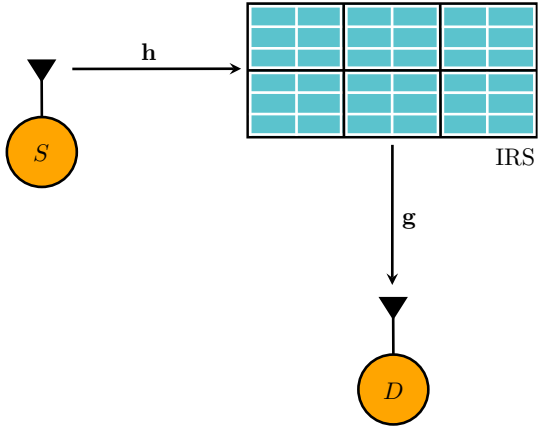}
  \caption{The considered IRS-aided communication network.}\label{fig1}
\end{figure}

Finally, we describe the main performance metrics considered for each scheme, namely, the outage probability, the energy efficiency and the diversity order. Let $\rho$ be a non-negative pre-defined threshold. Then,
\begin{align}
\Pi(\rho, T) = \PP\left\{\frac{1}{T} \sum_{t=1}^T \log_2\left(1+\gamma_t\right) < \rho\right\},
\end{align}
defines the achieved outage probability over $T$ channel uses. Moreover, the end-to-end energy efficiency, measured in bits per Joule, is written as
\begin{align}\label{ee}
\eta = \frac{\E\left\{\frac{1}{T} \sum_{t=1}^T \log_2\left(1+\gamma_t\right)\right\}}{P_c},
\end{align}
where $\E\left\{\frac{1}{T} \sum_{t=1}^T \log_2\left(1+\gamma_t\right)\right\}$ is the expected rate and $P_c$ is the system's total power consumption. Finally, if the outage probability of a scheme behaves like $\Pi(\rho, T) \approx G P^{-d}$ at high SNRs, then $d$ is the scheme's diversity order given by \cite{TSE}
\begin{align}
d = -\lim_{P\to\infty} \frac{\log(\Pi(\rho, T))}{\log(P)},
\end{align}
and $G$ is the coding gain \cite{TSE}.

\section{Coding-based IRS Schemes}\label{cbs}
In this section, we present the proposed coding-based schemes: the RRC scheme, which employs random phase shifts at each channel use, and the OBF scheme, which further implements a one-bit feedback protocol. The schemes are general and are not dependent on any particular coding design\footnote{The schemes can be implemented with any coding technique, e.g. Gaussian codebooks, which achieve capacity, polar codes or low-density parity check (LDPC) codes, which are more practical and can provide near-capacity performance.}; we omit any dependency on a specific coding technique, by considering as the main performance metric the outage probability instead of the error probability \cite{SHENG}.

We first provide a preliminary result in Proposition \ref{prop1}, which refers to the conventional non-coherent (random) case over one channel use, i.e. $T = 1$. This will assist in the derivation of the main analytical results of this paper.

\begin{proposition}\label{prop1}
The outage probability achieved by a random rotation of $M$ elements over one channel use is given by
\begin{align}\label{op}
\Pi(\rho,1) = 1-\frac{2}{\Gamma(M)}\left(\frac{\theta\sigma_n^2}{\sigma^2 P}\right)^{\frac{M}{2}} K_M\left(2 \sqrt{\frac{\theta\sigma_n^2}{\sigma^2 P}}\right),
\end{align}
where $\theta \triangleq 2^\rho - 1$.
\end{proposition}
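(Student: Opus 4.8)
The plan is to compute the cumulative distribution function of the single-use SNR $\gamma_1 = (P/\sigma_n^2) H_1$ and then evaluate it at the threshold that makes $\log_2(1+\gamma_1) < \rho$, namely $\gamma_1 < 2^\rho - 1 = \theta$, i.e. $H_1 < \theta\sigma_n^2/P$. So everything reduces to finding the distribution of $H_1 = \bigl|\sum_{i=1}^M h_i g_i \exp(\jmath\phi_{1,i})\bigr|^2$. First I would observe that, conditioned on the magnitudes $\{|h_i g_i|\}$, the phases $\phi_{1,i}$ are i.i.d.\ uniform on $[0,2\pi)$, and moreover the phases of $h_i$ and $g_i$ are themselves uniform; hence each term $h_i g_i \exp(\jmath\phi_{1,i})$ has a phase that is uniform and independent of everything else. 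The key structural fact is therefore that $w \triangleq \sum_{i=1}^M h_i g_i \exp(\jmath\phi_{1,i})$, conditioned on the vector of products $v_i \triangleq |h_i||g_i|$, is a sum of independent complex random variables each of which is circularly symmetric with second moment $\E\{|h_i g_i|^2\} = \sigma_h^2\sigma_g^2 = \sigma^2$.

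The main technical point — and the part I expect to be the real obstacle — is handling the distribution of $\sum_i v_i^2$ where $v_i = |h_i||g_i|$ is a product of two independent Rayleigh-distributed magnitudes (a so-called double-Rayleigh, or cascaded, random variable). There are two natural routes. The cleaner one is to condition on $\mathbf{h}$: given $\mathbf{h}$, the quantity $w = \sum_i h_i g_i \exp(\jmath\phi_{1,i})$ is a linear combination of the independent Gaussians $g_i \sim \mathcal{CN}(0,\sigma_g^2)$ with deterministic (unit-modulus-times-$h_i$) coefficients, so $w \mid \mathbf{h} \sim \mathcal{CN}\bigl(0, \sigma_g^2 \sum_i |h_i|^2\bigr)$, and consequently $H_1 \mid \mathbf{h}$ is exponential with mean $\sigma_g^2 \|\mathbf{h}\|^2$. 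Since $\|\mathbf{h}\|^2 = \sum_{i=1}^M |h_i|^2$ is Gamma-distributed with shape $M$ and scale $\sigma_h^2$, I would then write
\begin{align}
\Pi(\rho,1) &= \PP\{H_1 < \tfrac{\theta\sigma_n^2}{P}\} = \E_{\mathbf{h}}\left\{1 - \exp\!\left(-\frac{\theta\sigma_n^2}{P\,\sigma_g^2\|\mathbf{h}\|^2}\right)\right\} \nonumber \\
&= 1 - \int_0^\infty \exp\!\left(-\frac{\theta\sigma_n^2}{P\sigma_g^2 y}\right) \frac{y^{M-1} e^{-y/\sigma_h^2}}{\Gamma(M)(\sigma_h^2)^M}\, dy.
\end{align}

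The remaining step is to recognize the integral as a standard Bessel-function representation: $\int_0^\infty y^{M-1} \exp(-a/y - b y)\, dy = 2 (a/b)^{M/2} K_M(2\sqrt{ab})$, with $a = \theta\sigma_n^2/(P\sigma_g^2)$ and $b = 1/\sigma_h^2$, so that $ab = \theta\sigma_n^2/(P\sigma^2)$ and $(a/b)^{M/2} = (\theta\sigma_n^2 \sigma_h^2/(P\sigma_g^2))^{M/2}$; combined with the normalizing constant $1/(\Gamma(M)(\sigma_h^2)^M)$ this collapses to $\frac{2}{\Gamma(M)}\bigl(\frac{\theta\sigma_n^2}{\sigma^2 P}\bigr)^{M/2} K_M\bigl(2\sqrt{\theta\sigma_n^2/(\sigma^2 P)}\bigr)$, which is exactly \eqref{op}. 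The only subtlety to double-check is that conditioning on $\mathbf{h}$ genuinely makes $H_1$ exponential — this needs the randomness of the phases $\phi_{1,i}$ (or, alternatively, of $\arg g_i$) to guarantee the conditional law of $w$ is circularly symmetric with no loss from phase alignment — after which the derivation is routine integral calculus.
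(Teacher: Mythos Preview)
Your proposal is correct. Both you and the paper reduce $\Pi(\rho,1)$ to the CDF of $H_1$ evaluated at $\theta\sigma_n^2/P$, after noting that the random rotations leave the distribution of $H_1$ unchanged (i.e.\ $H_1 \stackrel{d}{=} |\sum_i h_i g_i|^2$). The difference is in how that CDF is obtained. The paper simply quotes the closed-form CDF $F_{H_1}(x) = 1 - \frac{2}{\Gamma(M)}(x/\sigma^2)^{M/2} K_M(2\sqrt{x/\sigma^2})$ from the literature on cascaded (double-Rayleigh) channels \cite{JDG}. You instead derive it from scratch: conditioning on $\mathbf{h}$ makes $w$ a zero-mean complex Gaussian with variance $\sigma_g^2\|\mathbf{h}\|^2$, so $H_1\mid\mathbf{h}$ is exponential; averaging over the Gamma-distributed $\|\mathbf{h}\|^2$ via the identity $\int_0^\infty y^{M-1}\exp(-a/y-by)\,dy = 2(a/b)^{M/2}K_M(2\sqrt{ab})$ recovers the same CDF. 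Your route is more self-contained and makes transparent the exponential--Gamma mixture structure (which in fact reappears later in the paper's CLT approximation, Lemma~\ref{lemma}); the paper's route is shorter but relies on an external reference.
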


\begin{proof}
See Appendix \ref{prop1_prf}.
\end{proof}

\subsection{Random Rotations Coding-based Scheme}\label{rrc}
The random phase shifts induced by the IRS elements at each channel use, ensure that the symbols are transmitted over $T$ independent channels. This time-scale fluctuations generate an artificial fast fading channel, which with an appropriate combination scheme at the receiver, e.g. maximal ratio combiner (MRC), converts the spatial diversity to time diversity \cite{SHENG}. It is worth mentioning that this scheme requires no CSI knowledge at the source. Note that the instantaneous channel gains between different channel uses are correlated, which makes the derivation of the outage probability challenging. As such, we present two approximations (lower bounds) in the following two theorems. In particular,
\begin{itemize}
  \item Theorem \ref{thm_rrc1} provides an approximate mathematical expression by assuming that the channel gains $H_t$ are mutually independent; it is proven analytically that, as $M$ increases, the correlation over different channel uses decreases.
  \item Theorem \ref{thm_rrc2} derives the outage probability by using the central limit theorem (CLT), which approximates $H_t$ as an exponential random variable; this results in a simpler analytical expression.
\end{itemize} 
We show in Section \ref{numerical} that both approximations are sufficient and appropriate to describe the proposed scheme's behavior.

\begin{theorem}\label{thm_rrc1}
The outage probability of the RRC scheme, under the independence assumption, is approximated by 
\begin{align}\label{op_rrc1}
\Pi_{\rm RRC}^{\rm IND}(\rho,T) &\approx \left(\frac{\sigma_n^2}{P}\right)^{T-1} \int_1^{\xi_T} \cdots \int_1^{\xi_2}\nonumber\\
&\quad\times\left(1-\frac{2}{\Gamma(M)}\left(\frac{\Theta}{\sigma^2}\right)^{M/2} K_M \left(2 \sqrt{\frac{\Theta}{\sigma^2}}\right)\right) \nonumber\\
&\quad\times \prod_{t=2}^T f_H\left(\frac{\sigma_n^2(w_t-1)}{P}\right) dw_2\cdots dw_T,
\end{align}
where $\xi_i \triangleq 2^{\rho T}/\prod_{t=i+1}^T w_t$, $2 \leq i \leq T$,
\begin{align}\label{Theta}
\Theta \triangleq \frac{\sigma_n^2}{P}\left(\frac{2^{\rho T}}{\prod_{t=2}^T w_t}-1\right),
\end{align}
and
\begin{align}\label{pdf}
f_H(h) = \frac{2}{\Gamma(M)}\frac{h^{(M-1)/2} }{\sigma^{M+1}}K_{M-1}\left(2\sqrt{\frac{h}{\sigma^2}}\right).
\end{align}
\end{theorem}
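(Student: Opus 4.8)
The plan is to compute the outage probability directly from its definition, treating the $T$ channel gains as independent (this is the stated approximation), and reducing everything to the single-use result in Proposition~\ref{prop1}.

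First I would rewrite the outage event. Since $\gamma_t = (P/\sigma_n^2)H_t$, the sum-rate condition $\frac{1}{T}\sum_{t=1}^T\log_2(1+\gamma_t) < \rho$ is equivalent to $\prod_{t=1}^T(1+\gamma_t) < 2^{\rho T}$, i.e.
\begin{align}
\prod_{t=1}^T\left(1+\tfrac{P}{\sigma_n^2}H_t\right) < 2^{\rho T}.\nonumber
\end{align}
It is convenient to change variables to $w_t \triangleq 1 + (P/\sigma_n^2)H_t$, so $w_t \geq 1$ and $H_t = \sigma_n^2(w_t-1)/P$; the Jacobian of each substitution is $\sigma_n^2/P$, which accounts for the prefactor $(\sigma_n^2/P)^{T-1}$ once one of the $T$ integrals is carried out explicitly below. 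The outage region becomes $\prod_{t=1}^T w_t < 2^{\rho T}$ with $w_t \geq 1$.

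Next I would integrate out $w_1$ (equivalently $H_1$) first, keeping $w_2,\dots,w_T$ fixed. Given the others, the constraint on $w_1$ is $1 \leq w_1 < 2^{\rho T}/\prod_{t=2}^T w_t$, which is exactly a single-channel-use outage event for $H_1$ with effective threshold parameter $\Theta$ as defined in \eqref{Theta}: indeed $w_1 < 2^{\rho T}/\prod_{t=2}^T w_t$ translates to $H_1 < \Theta$, and $\PP\{H_1 < \Theta\}$ is precisely the quantity appearing in Proposition~\ref{prop1} with $\theta\sigma_n^2/P$ replaced by $\Theta$ — giving the factor $1 - \frac{2}{\Gamma(M)}(\Theta/\sigma^2)^{M/2}K_M(2\sqrt{\Theta/\sigma^2})$. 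For the remaining variables $w_2,\dots,w_T$ I would use the density $f_H$ of each $H_t$ (stated in \eqref{pdf}, which is derived in the proof of Proposition~\ref{prop1}), composed with the change of variables, yielding $f_H(\sigma_n^2(w_t-1)/P)$ and the upper limits $\xi_i = 2^{\rho T}/\prod_{t=i+1}^T w_t$; note that the constraint $\prod_{t=1}^T w_t < 2^{\rho T}$ together with $w_1 \geq 1$ forces $\prod_{t=2}^T w_t < 2^{\rho T}$, which is what makes the nested limits well-defined (so that $\Theta > 0$ on the domain of integration).

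The only genuine subtlety — and the main obstacle — is justifying the independence approximation: the $H_t$ share the \emph{same} fading coefficients $h_i,g_i$ across the coherence block and differ only through the random phases $\phi_{t,i}$, so they are in fact correlated. I would argue (as the theorem promises) that the pairwise correlation of the $H_t$ decreases as $M$ grows: conditioned on $\{h_i,g_i\}$, each $H_t$ is a sum of $M$ unit-modulus-weighted independent phasors, and a direct second-moment computation of $\mathrm{Cov}(H_t,H_{t'})$ over the random phases shows the cross term is $O(M)$ while the variances are $O(M^2)$, so the normalized correlation is $O(1/M)$. With this, the product form $\prod_{t=2}^T f_H(\cdot)$ is a reasonable surrogate, and I would flag it explicitly as an approximation (lower bound) rather than an identity, deferring numerical validation to Section~\ref{numerical}. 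The rest is bookkeeping: collecting the Jacobian factors, the Proposition~\ref{prop1} factor, the product of densities, and the nested limits gives exactly \eqref{op_rrc1}.
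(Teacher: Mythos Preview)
Your proposal is correct and follows essentially the same route as the paper: rewrite the outage event as $\prod_t(1+\gamma_t)<2^{\rho T}$, integrate out $H_1$ via the CDF from Proposition~\ref{prop1}, use the density $f_H$ for the remaining variables, and apply the substitution $w_t=1+(P/\sigma_n^2)H_t$ to obtain the stated limits and prefactor. The only minor difference is that the paper computes the \emph{unconditional} pairwise correlation explicitly as $\zeta=3/(M+2)$ (from $\E\{H_i\}=\sigma^2 M$, $\E\{H_i^2\}=2\sigma^4 M(M+1)$, $\E\{H_iH_j\}=\sigma^4 M(M+3)$), whereas you give a conditional order-of-magnitude argument; both yield the same $O(1/M)$ decay and justify the independence approximation.
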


\begin{proof}
See Appendix \ref{thm_rrc1_prf}.
\end{proof}

It is clear that for the case $T=1$, Theorem \ref{thm_rrc1} corresponds to the exact analytical result in Proposition \ref{prop1}. Next, we provide the lemma below, which approximates the channel gain $H_t$ as an exponential random variable, and then state Theorem \ref{thm_rrc2}.

\begin{lemma}\label{lemma}
Under the CLT, the channel gain $H_t$ converges in distribution to an exponential random variable, with parameter $1/(\sigma^2 M)$.
\end{lemma}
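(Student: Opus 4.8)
The plan is to write $H_t$ as the squared modulus of a sum of i.i.d.\ complex random variables and then apply the central limit theorem in the plane. First I would set $Z_{t,i} \triangleq h_i g_i \exp(\jmath\phi_{t,i})$, so that $H_t = \big|\sum_{i=1}^M Z_{t,i}\big|^2$. Since $\phi_{t,i}$ is uniform on $[0,2\pi)$ and independent of the product $h_i g_i$, each $Z_{t,i}$ is circularly symmetric; concretely $\E\{Z_{t,i}\} = 0$, $\E\{Z_{t,i}^2\} = \E\{(h_i g_i)^2\}\,\E\{\exp(2\jmath\phi_{t,i})\} = 0$, and $\E\{|Z_{t,i}|^2\} = \E\{|h_i|^2\}\,\E\{|g_i|^2\} = \sigma^2$. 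Decomposing $Z_{t,i} = A_{t,i} + \jmath B_{t,i}$, these identities give $\E\{A_{t,i}\} = \E\{B_{t,i}\} = 0$, $\E\{A_{t,i}^2\} = \E\{B_{t,i}^2\} = \sigma^2/2$ and $\E\{A_{t,i}B_{t,i}\} = 0$; moreover the vectors $(A_{t,i},B_{t,i})$ are i.i.d.\ across $i$ because the $h_i$, $g_i$ and $\phi_{t,i}$ are.

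Second, I would invoke the bivariate CLT for the partial sum $\big(\sum_{i=1}^M A_{t,i},\ \sum_{i=1}^M B_{t,i}\big)$, which, having i.i.d.\ summands with mean zero and covariance $(\sigma^2/2)\,\mathbf{I}_2$, converges after scaling by $1/\sqrt{M}$ to a zero-mean bivariate Gaussian with covariance $(\sigma^2/2)\,\mathbf{I}_2$; equivalently, $\sum_{i=1}^M Z_{t,i}$ is, for large $M$, well approximated by a $\mathcal{CN}(0,M\sigma^2)$ variable. Third, applying the continuous mapping theorem to the map $z \mapsto |z|^2$ gives that $H_t/M = \big|\tfrac{1}{\sqrt{M}}\sum_{i=1}^M Z_{t,i}\big|^2$ converges in distribution to $|W|^2$ with $W \sim \mathcal{CN}(0,\sigma^2)$, and $|W|^2$ is exponentially distributed with mean $\sigma^2$. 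Hence, in the CLT sense, $H_t$ is exponential with mean $M\sigma^2$, i.e.\ with parameter $1/(\sigma^2 M)$, which is the claim.

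The delicate point is justifying the Gaussian approximation rather than the final chi-square step: $A_{t,i}$ and $B_{t,i}$ are uncorrelated but \emph{not} independent, so one cannot simply multiply two scalar CLTs and must instead verify the i.i.d., finite-variance hypotheses for the two-dimensional sum and check that the limiting covariance is isotropic --- it is precisely the vanishing of $\E\{Z_{t,i}^2\}$ (a consequence of $\E\{\exp(2\jmath\phi_{t,i})\} = 0$) that kills the off-diagonal entry and equalises the two component variances. It is also worth noting, as is standard for CLT-based arguments, that $H_t$ does not literally converge as $M\to\infty$ (its mean grows with $M$); the rigorous statement is the convergence of $H_t/M$, and ``parameter $1/(\sigma^2 M)$'' should be read as the exponential approximation valid for moderately large $M$, which is exactly the regime of interest here and whose accuracy is confirmed numerically in Section \ref{numerical}.
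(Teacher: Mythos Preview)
Your proposal is correct and follows essentially the same route as the paper: decompose the complex sum into real and imaginary parts, verify zero mean, equal variances $\sigma^2/2$ and zero cross-moment for each summand, apply the CLT to obtain a $\mathcal{CN}(0,M\sigma^2)$ approximation, and conclude that the squared modulus is exponential with mean $M\sigma^2$. If anything, your version is slightly more careful than the paper's --- you invoke the bivariate CLT explicitly (rather than two marginal CLTs plus an unargued joint-Gaussianity claim) and you flag that the literal limit is for $H_t/M$, not $H_t$ --- but the underlying argument is the same.
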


\begin{proof}
See Appendix \ref{lemma_prf}.
\end{proof}

\begin{theorem}\label{thm_rrc2}
The outage probability of the RRC scheme, under the CLT, is approximated by 
\begin{align}\label{op_rrc2}
\Pi_{\rm RRC}^{\rm CLT}(\rho,T) &\approx \left(\frac{\sigma_n^2}{\sigma^2 MP}\right)^{T-1}\int_1^{\xi_T} \cdots \int_1^{\xi_2}\nonumber\\
&\quad\times\left(1-\exp\left(-\frac{\Theta}{\sigma^2 M}\right)\right)\nonumber\\
&\quad\times\prod_{t=2}^T \exp\left(-\frac{\sigma_n^2 (w_t-1)}{\sigma^2 MP}\right) dw_2\cdots dw_T,
\end{align}
where $\Theta$ and $\xi_i$ are given in Theorem \ref{thm_rrc1}.
\end{theorem}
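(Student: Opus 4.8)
The plan is to mirror the derivation of Theorem~\ref{thm_rrc1}, but to substitute the exact statistics of the per-slot channel gain $H_t$ with the exponential approximation supplied by Lemma~\ref{lemma}. First I would rewrite the outage event: since $\log_2$ is monotone, $\frac{1}{T}\sum_{t=1}^T \log_2(1+\gamma_t) < \rho$ is equivalent to $\prod_{t=1}^T(1+\gamma_t) < 2^{\rho T}$. Introducing $w_t \triangleq 1+\gamma_t = 1 + \frac{P}{\sigma_n^2}H_t \geq 1$ and, exactly as in Theorem~\ref{thm_rrc1}, adopting the independence assumption on $H_1,\dots,H_T$, the outage probability becomes $\PP\{\,w_1\prod_{t=2}^T w_t < 2^{\rho T}\,\}$.

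Next I would condition on $(w_2,\dots,w_T)$ and integrate. For fixed $w_2,\dots,w_T$ the remaining constraint is $w_1 < \xi_1 \triangleq 2^{\rho T}/\prod_{t=2}^T w_t$, which is non-vacuous only when $\prod_{t=2}^T w_t < 2^{\rho T}$; peeling off one variable at a time produces exactly the nested limits $w_t \in [1,\xi_t]$ with $\xi_i = 2^{\rho T}/\prod_{t=i+1}^T w_t$, $2\leq i\leq T$. The conditional probability $\PP\{w_1 < \xi_1\}$ equals $\PP\{H_1 < \Theta\}$ with $\Theta$ as in~\eqref{Theta}; by Lemma~\ref{lemma}, $H_1$ is exponential with mean $\sigma^2 M$, so this probability is $1-\exp(-\Theta/(\sigma^2 M))$, which replaces the Bessel-function term of Theorem~\ref{thm_rrc1}.

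Finally I would supply the density of each $w_t$ for $t\geq 2$: changing variables from $H_t$ to $w_t$ gives $\frac{\sigma_n^2}{P}\,f_H\!\left(\frac{\sigma_n^2(w_t-1)}{P}\right)$, and inserting the exponential density $f_H(h)=\frac{1}{\sigma^2 M}\exp(-h/(\sigma^2 M))$ from Lemma~\ref{lemma} turns this into $\frac{\sigma_n^2}{\sigma^2 M P}\exp\!\left(-\frac{\sigma_n^2(w_t-1)}{\sigma^2 M P}\right)$. Collecting the $T-1$ Jacobian prefactors into $\left(\frac{\sigma_n^2}{\sigma^2 M P}\right)^{T-1}$ and the $T-1$ exponential factors into the product in~\eqref{op_rrc2} yields the claimed expression; setting $T=1$ collapses it to the exponential-CDF counterpart of Proposition~\ref{prop1}, a useful sanity check. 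The derivation is essentially routine once Lemma~\ref{lemma} is available; the only point requiring care is the bookkeeping of the nested integration region, i.e.\ ensuring that $\prod_{t=1}^T w_t < 2^{\rho T}$ together with $w_t\geq 1$ is correctly encoded in the limits $\xi_i$, while the (already adopted) independence assumption is what lets the joint density factorize. I do not expect a genuine obstacle here.
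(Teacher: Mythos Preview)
Your proposal is correct and mirrors the paper's own argument exactly: the paper simply states that the result follows by repeating the proof of Theorem~\ref{thm_rrc1} with the exponential CDF $F_H(h)=1-\exp(-h/(\sigma^2 M))$ and PDF $f_H(h)=\exp(-h/(\sigma^2 M))/(\sigma^2 M)$ supplied by Lemma~\ref{lemma}. Your write-up is in fact more detailed than the paper's, but the structure---rewrite the outage event as a product constraint, condition on $w_2,\dots,w_T$, use the exponential CDF for the inner probability and the exponential PDF (with Jacobian $\sigma_n^2/P$) for the remaining variables---is identical.
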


\begin{proof}
By Lemma \ref{lemma}, the random variables $H_t$ are independent of $t$. Hence, the final result can be derived by following similar steps to the proof of Theorem \ref{thm_rrc1} with cumulative distribution function (CDF) $F_H(h)=1-\exp(-h/\sigma^2 M)$ and PDF $f_H(h)=\exp(-h/\sigma^2 M)/\sigma^2 M$.
\end{proof}

\begin{remark}
Even though the above approximations are based on the assumption that $M$ is large, they can also approximate small $M$ cases very well; this is verified in Section \ref{numerical}. Overall, Theorem \ref{thm_rrc1} provides a more accurate approximation for any value of $M$. On the other hand, Theorem \ref{thm_rrc2} uses exponential functions and thus can provide further system insights by assisting in the derivation of the diversity order and coding gain (see below). Finally, asymptotically ($M\to\infty$), both theorems produce the same results.
\end{remark}

We now turn our attention to the energy efficiency achieved by this scheme, as described by \eqref{ee}, which takes into account the achieved expected rate. Note that the expected rate of RRC is independent of $T$ since
\begin{align}
\E\left\{\frac{1}{T} \sum_{t=1}^T \log_2\left(1+\gamma_t\right)\right\} &= \frac{1}{T} \sum_{t=1}^T \E\left\{\log_2\left(1+\gamma_t\right)\right\}\nonumber\\
&= \E\left\{\log_2\left(1+\gamma_t\right)\right\}.
\end{align}
Thus, we can state the following.

\begin{proposition}\label{prop_rrc}
The expected rate achieved by the RRC scheme is
\begin{align}
R_{\rm RRC} = \frac{2}{\Gamma(M)} \int_0^\infty \left(\frac{\theta\sigma_n^2}{\sigma^2 P}\right)^{\frac{M}{2}} K_M\left(2 \sqrt{\frac{\theta\sigma_n^2}{\sigma^2 P}}\right) d\rho,
\end{align}
where $\theta \triangleq 2^\rho - 1$.
\end{proposition}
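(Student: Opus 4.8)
The plan is to recognize $R_{\rm RRC}$ as the tail integral of the single-channel-use outage probability, which Proposition~\ref{prop1} already provides in closed form.

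First, as noted immediately before the statement, the $\gamma_t$ are identically distributed, so the time average collapses and $R_{\rm RRC} = \E\{\log_2(1+\gamma_t)\}$ for an arbitrary fixed $t$; it therefore suffices to handle a single channel use. Next, since $\log_2(1+\gamma_t)$ is a non-negative random variable, I would apply the layer-cake identity $\E\{X\} = \int_0^\infty \PP\{X > \rho\}\,d\rho$ to obtain
\begin{align}
R_{\rm RRC} = \int_0^\infty \PP\{\log_2(1+\gamma_t) > \rho\}\,d\rho = \int_0^\infty \bigl(1 - \Pi(\rho,1)\bigr)\,d\rho,
\end{align}
where the second equality uses the definition of the achieved outage probability specialized to $T=1$, together with the fact that $\gamma_t$ has a density (so that $\PP\{\log_2(1+\gamma_t)\le\rho\} = \PP\{\log_2(1+\gamma_t)<\rho\} = \Pi(\rho,1)$) and that this probability does not depend on $t$. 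Finally, substituting the closed form $1-\Pi(\rho,1) = \frac{2}{\Gamma(M)}\bigl(\frac{\theta\sigma_n^2}{\sigma^2 P}\bigr)^{M/2} K_M\bigl(2\sqrt{\theta\sigma_n^2/(\sigma^2 P)}\bigr)$ from Proposition~\ref{prop1}, with $\theta = 2^\rho - 1$, yields the claimed expression directly.

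The only step needing a brief justification is the legitimacy of writing the expectation as a tail integral, i.e. its finiteness and the interchange of integration and expectation. Because the integrand $1-\Pi(\rho,1)$ is non-negative and $K_M(z)$ decays exponentially in $z$ while $\sqrt{\theta}$ grows like $2^{\rho/2}$, the integral converges, and Tonelli's theorem validates the manipulation. Beyond this, the argument is essentially a one-line computation once Proposition~\ref{prop1} is in hand, so I do not anticipate any genuine obstacle.
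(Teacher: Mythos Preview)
Your proposal is correct and follows essentially the same argument as the paper: reduce to a single channel use (already noted before the proposition), apply the tail-integral identity $\E\{X\}=\int_0^\infty \PP\{X>\rho\}\,d\rho$, and substitute $1-\Pi(\rho,1)$ from Proposition~\ref{prop1}. The only addition is your explicit convergence justification via the exponential decay of $K_M$ and Tonelli, which the paper omits.
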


\begin{proof}
The result follows simply from the fact that the expectation of a non-negative random variable $X$ is given by $\E\{X\} = \int_{x>0} \PP\{X > x\} d x$. Therefore, $R_{\rm RRC} = \int_0^\infty \PP\left\{\log_2\left(1+\gamma_t\right) > \rho \right\} d\rho$, and the final expression is derived by using $\PP\left\{\log_2\left(1+\gamma_t\right) > \rho \right\} = 1- \Pi(\rho,1)$, where $\Pi(\rho,1)$ is given by Proposition \ref{prop1}.
\end{proof}

Then, from \eqref{ee}, we have that the energy efficiency achieved by the RRC scheme is
\begin{align}\label{ee_rrc}
\eta_{\rm RRC} = \frac{R_{\rm RRC}}{P/\xi + P_S + P_D +P_{\rm IRS}},
\end{align}
where $\xi$ is the amplifier's efficiency, whereas $P_S$, $P_D$ and $P_{\rm IRS}$ are the static power consumption at the source, destination and IRS, respectively \cite{CH}. The power consumption at the IRS depends on the number of activated elements, that is, $P_{\rm IRS} = MP_E$, where $P_E$ is the power consumed to operate a single element.

\subsection{One-bit Feedback Scheme}\label{obf}
We now present a coding-based scheme, which implements a distributed ascent algorithm \cite{RM} and is of low-complexity in terms of time and memory. The algorithm aims to achieve beamforming by adjusting the phase shift at each element based on a one-bit per channel use feedback protocol over a training period of duration $\tau \leq T$. The one-bit feedback from the destination to the IRS controller, dictates whether or not the change in the phase rotations has increased the received SNR compared to an SNR value $\gamma_0$ achieved at a previous channel use. Therefore, the only knowledge required is the set of phases that provided the highest channel gain at the destination. In other words, any set of phases that reduce the channel gain compared to a previous time instant are discarded. Specifically, the algorithm follows the steps below for the first $\tau \leq T$ channel uses
\begin{itemize}
\item At $t=1$, the IRS controller sets the initial phase shifts $\phi_{0,i} = \phi_{1,i} \in [0,2\pi)$, $\forall~i$, and the destination sets the initial value of $\gamma_0$, i.e. $\gamma_0 = \gamma_1 = \frac{P}{\sigma_n^2} \Big\lvert \sum_{i=1}^M h_i g_i \exp(\jmath\phi_{1,i})\Big\rvert^2$.
\item At each time instant $2 \leq t \leq \tau$, each element of the IRS rotates the phase of the received signal by using the following update step
$
\phi_{t,i} = \phi_{0,i} + \delta_{t,i},~ \forall i \in \{1,\dots,M\},
$
where $\delta_{t,i}$ is uniformly distributed in $[-\Delta,\Delta]$ and $\Delta \in (0,\pi]$ is the maximum step size.
\item If $\gamma_t > \gamma_0$, the destination returns a positive feedback and sets $\gamma_0 = \gamma_t$; otherwise, it sends a negative feedback and $\gamma_0$ remains unchanged. In turn, the IRS controller sets $\phi_{0,i} = \phi_{t,i}$ if it receives a positive feedback; otherwise, $\phi_{0,i}$ is not changed.
\end{itemize}

Then, at time instant $\tau+1$, the IRS fixes the phase rotations at $\phi_{0,i}$ for the remaining $T-\tau$ channel uses. Based on the above, at time instant $t$, the rotation angle of the $i$-th IRS element is
\begin{align}
\phi_{t,i} = \phi_{1,i} + \sum_{n=2}^t \delta_{n,i} \mathds{1}_{\gamma_n > \gamma_0},
\end{align}
while the channel gain at $t$ is
\begin{align}
H_t = \left|\sum_{i=1}^M h_i g_i \exp\left(\jmath \left(\phi_{1,i} + \sum_{n=2}^t \delta_{n,i} \mathds{1}_{\gamma_n > \gamma_0}\right)\right)\right|^2.
\end{align}

\begin{figure}[t]\centering
\includegraphics[width=\linewidth]{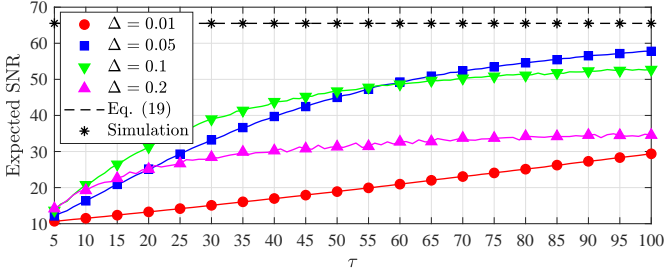}
\caption{Expected SNR versus duration of training period $\tau$; $\sigma^2 = 0$ dB, $P=0$ dB, $\sigma_n^2 = 0$ dB.}\label{fig2}
\end{figure}

It is important to note that the algorithm converges to the beamforming gain, i.e. phase alignment, for sufficiently large values of $\tau$ and $T$, regardless of the maximum step size $\Delta$; this will be shown in Section \ref{numerical}. However, for a low-complexity scenario, $\tau$ is fixed and the maximum channel gain that can be achieved, highly depends on the choice of $\Delta$. This is depicted in Fig. \ref{fig2}, where the expected SNR for the beamforming gain $H^* = \left(\sum_{i=1}^M \lvert h_i \rvert \lvert g_i \rvert\right)^2$is analytically given by
\begin{align}
&\E_{H^*}\left\{\frac{P}{\sigma_n^2} H^*\right\}\nonumber\\
&= \frac{P}{\sigma_n^2} \E\left\{\sum_{i=1}^M \lvert h_i \rvert^2 \lvert g_i \rvert^2 + 2\sum_{i=1}^{M-1} \sum_{j=i}^M \lvert h_i \rvert \lvert g_i \rvert \lvert h_j \rvert \lvert g_j \rvert\right\}\nonumber\\
&\stackrel{(a)}{=}\frac{P}{\sigma_n^2} \left(M\E\left\{\lvert h \rvert^2 \lvert g \rvert^2\right\} + 2\binom{M}{2} \E\left\{\lvert h \rvert \lvert g \rvert\right\}^2\right)\nonumber\\
&\stackrel{(b)}{=}\frac{P}{\sigma_n^2}\left(\sigma^2 M+ \binom{M}{2}\frac{\sigma^2 \pi^2}{8}\right),\label{ev_bf_snr}
\end{align}
where $(a)$ follows from the fact that $\lvert h_i \rvert \lvert g_i \rvert$ are mutually independent and $(b)$ follows from $\E\left\{\lvert h \rvert^2 \lvert g \rvert^2\right\} = \sigma_h^2\sigma_g^2 = \sigma^2$ and $\E\left\{\lvert h \rvert \lvert g \rvert\right\} = \sqrt{\sigma_h^2\pi}\sqrt{\sigma_g^2\pi}/4$.

The outage probability achieved by the proposed algorithm can be written as
\begin{align*}
\Pi_{\rm OBF}(\rho,\tau) = \PP\left\{\frac{1}{T} (R_{\rm TP} + R_{\rm BF}) < \rho\right\},
\end{align*}
where
\begin{equation}
R_{\rm TP} = \sum_{t = 1}^\tau \log_2\left(1+\frac{P}{\sigma_n^2}H_t\right),
\end{equation}
is the achieved sum-rate during the training period, and 
\begin{equation}
R_{\rm BF} = (T-\tau)\log_2\left(1+\frac{P}{\sigma_n^2}H_0\right),
\end{equation}
is the achieved sum-rate after completion of the training period, with constant phase rotations $\phi_{0,i}$ for $T-\tau$ channel uses; clearly, $\phi_{0,i} = \phi_{\tau,i}$ and $H_0 = H_\tau$. Furthermore, the energy efficiency achieved by the OBF scheme can be written as
\begin{align}
\eta_{\rm OBF} &= \frac{1}{T}\frac{\E\{R_{\rm TP} + R_{\rm BF}\}}{P/\xi + P_S + P_D +P_{\rm IRS}},\label{ee_obf}
\end{align}
where we assume that the power consumed by the one-bit feedback is negligible and so the total power consumption is equal to the one of the RRC scheme. In order to evaluate the expected rates in the above expression, we use the approximation of the expected rate for a channel with $b$ feedback bits by \cite{NJ}
\begin{align}\label{approx_rate}
&\mathbb{E}\left\{\log_2\left(1+\frac{P}{\sigma_n^2}H_t\right)\right\}\nonumber\\
&\approx \mathbb{E}\left\{\log_2\left(1+\frac{P}{\sigma_n^2} H^* \left(1-2^{-\frac{b}{M-1}}\right)\right)\right\},
\end{align}
where the $b$ feedback bits guarantee that each bit provides an increase in performance. However, based on the OBF scheme, not all feedback bits provide a performance gain. Hence, at the $t$-th time instant, we take $b = \kappa t$, where $\kappa \in (0,1]$ is a constant that is tuned numerically and determines the effectiveness of the algorithm in terms of $t$. Therefore, from \eqref{approx_rate}, we get
\begin{align}
&\mathbb{E}\left\{\log_2\left(1+\frac{P}{\sigma_n^2} H^* \left(1-2^{-\frac{\kappa t}{M-1}}\right)\right)\right\}\nonumber\\
&\leq \log_2\left(1+\frac{P}{\sigma_n^2} \mathbb{E}\left\{H^*\right\} \left(1-2^{-\frac{\kappa t}{M-1}}\right)\right)\nonumber\\
&= \log_2\left(1 + \frac{P}{\sigma_n^2}\left(M+ \binom{M}{2}\frac{\pi^2}{8}\right) \left(1-2^{-\frac{\kappa t}{M-1}}\right)\right),
\end{align}
which follows from the Jensen's inequality and the result in \eqref{ev_bf_snr}. By substituting the above expression in \eqref{ee_obf}, we get an analytical approximation for the energy efficiency.

\subsection{Diversity Analysis}
The RRC scheme virtually behaves as an $T\times M$ Rayleigh product channel \cite{SHENG2}. Since all the messages are sent by the source at each channel use, this is equivalent to using $T$ antennas over one time slot. Additionally, due to the independent random phase rotations at the IRS, the scheme can achieve diversity order equal to $\min(T,M)$ \cite{SHENG2}. By considering $P\to\infty$ and the two cases $M > T$ and $T > M$, we prove analytically in Appendix \ref{div_rrc} that
\begin{align}
d_{\rm RRC} = \min(T,M),
\end{align}
and
\begin{align}
G_{\rm RRC} &=\left(\frac{\sigma_n^2}{M}\right)^T (-1)^T\left(1 - 2^{\rho T} \sum_{t=0}^{T-1} \frac{(-1)^{t}}{t!} \log^t(2^{\rho T})\right),
\end{align}
are the achieved diversity order and coding gain of the RRC scheme, respectively.

Finally, as the OBF scheme employs the RRC scheme for the first $\tau$ channel uses, we can deduce that its diversity order is also $\min(\tau,M)$.

\section{Selection-based IRS Schemes}\label{sbs}
For the selection-based schemes, we consider $T=1$ and assume that the IRS is partitioned into $N$ non-overlapping sub-surfaces of $m$ elements, where $N$ is a divisor of $M$, i.e. $mN = M$. An example of the system model is illustrated in Fig. \ref{fig1} with $N=6$ and $m=6$, where the partitions are shown by the solid lines. We consider a closed-loop system, that is, we assume that there is knowledge of the received SNR power at the destination from each sub-surface via an error-free feedback scheme \cite{SLIM}. This can be implemented by a training period of duration $\tau$, where each sub-surface is turned on sequentially for a duration $\tau/N$; in other words, the reflection amplitude for those elements is set to $1$, otherwise it is set to zero\footnote{When an element is off, it acts as a conducting object and thus only structural-mode reflections are possible \cite{CAB}. This structural-mode component is deterministic and so its effect can be ignored.} \cite{NADEEM}. At the end of the training period, the destination feeds back to the IRS controller the index of the sub-surface which achieved the highest received signal strength (RSS).

\subsection{Transmit Diversity Scheme}
For the TD scheme, the IRS controller selects and activates, at each time slot, the sub-surface which achieves the highest SNR at the destination. As a result, the destination needs to feed back to the IRS $b_{\rm TD} = \lceil \log_2(N) \rceil$ bits. The outage probability achieved by the proposed TD scheme is given below.

\begin{proposition}\label{prop_td}
The outage probability of the TD scheme is
\begin{align}
\Pi_{\rm TD}(\rho) = \Pi(\rho,1)^N,
\end{align}
where $N$ is the number of sub-surfaces with $m$ elements and $\Pi(\rho,1)$ is the outage probability of a random selection given by Proposition \ref{prop1} with $M=m$.
\end{proposition}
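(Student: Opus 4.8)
The plan is to exploit the structure of the selection rule together with independence across the disjoint sub-surfaces. Since $T=1$, the outage event is simply $\{\log_2(1+\gamma) < \rho\} = \{\gamma < \theta\}$ with $\theta = 2^\rho - 1$, where $\gamma$ denotes the SNR delivered by the \emph{active} sub-surface. Under the TD rule the IRS controller activates the sub-surface achieving the largest SNR, so the realized SNR equals $\gamma_{\max} = \max_{1\le k\le N}\gamma^{(k)}$, where $\gamma^{(k)} = \frac{P}{\sigma_n^2}\bigl\lvert \sum_{i\in\mathcal{S}_k} h_i g_i \exp(\jmath\phi_i)\bigr\rvert^2$ is the SNR contributed by the $k$-th sub-surface $\mathcal{S}_k$, with $\lvert\mathcal{S}_k\rvert = m$. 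The first step is then the elementary equivalence $\{\gamma_{\max} < \theta\} = \bigcap_{k=1}^N \{\gamma^{(k)} < \theta\}$: the best sub-surface fails the rate threshold if and only if every sub-surface does.

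Second, I would establish that the events $\{\gamma^{(k)} < \theta\}$, $k = 1,\dots,N$, are mutually independent. Because the partition is non-overlapping ($\mathcal{S}_k \cap \mathcal{S}_{k'} = \emptyset$ for $k\neq k'$, which is consistent with $mN = M$) and the fading coefficients $\{h_i\}$, $\{g_i\}$ together with the random phase shifts $\{\phi_i\}$ are independent across elements, the variable $\gamma^{(k)}$ is a measurable function of the disjoint collection $\{(h_i,g_i,\phi_i): i\in\mathcal{S}_k\}$ alone; disjoint collections of independent variables are independent, hence so are the induced outage events.

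Third, I would identify each factor via Proposition \ref{prop1}. Each sub-surface is precisely a random rotation of $m$ elements over one channel use, so $\PP\{\gamma^{(k)} < \theta\} = \Pi(\rho,1)$ with $M$ replaced by $m$ in \eqref{op}. Combining the three steps,
\begin{align*}
\Pi_{\rm TD}(\rho) = \PP\{\gamma_{\max} < \theta\} = \prod_{k=1}^N \PP\{\gamma^{(k)} < \theta\} = \Pi(\rho,1)^N,
\end{align*}
which is the claim.

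The only genuinely non-routine point is the independence argument in the second step; the rest is the definition of the selection rule and a direct appeal to Proposition \ref{prop1}. One subtlety worth a line is that the random phases used during the training phase (which drive the per-sub-surface RSS estimates) and those used in the actual transmission obey the same i.i.d.\ uniform law and are independent of the fading, so that under the error-free feedback assumption the estimated ranking coincides in distribution with the true one — this is what legitimizes the substitution in the third step.
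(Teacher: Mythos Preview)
Your proof is correct and follows essentially the same approach as the paper: both reduce the outage of the selected (maximum-SNR) sub-surface to the product of the $N$ individual outage probabilities via the i.i.d.\ structure across disjoint sub-surfaces, and then invoke Proposition~\ref{prop1} with $M=m$ for each factor. Your explicit justification of independence across the non-overlapping partitions is a welcome elaboration of what the paper compresses into the phrase ``using the distribution of ordered random variables.''
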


\begin{proof}
Assume the ordering
\begin{align}
\gamma_{(1)} \geq \gamma_{(2)} \geq \dots \geq \gamma_{(N)},
\end{align}
where $\gamma_{(i)}$, $1 \leq i \leq N$, is the $i$-th highest receiver SNR at the destination from the $N$ sub-surfaces of the IRS. Then, using the distribution of ordered random variables, the outage probability is \cite{SLIM}
\begin{align}
\Pi_{\rm TD}(\rho) = \PP\{\log_2(1+\gamma) < \rho\}^N,
\end{align}
where $\PP\{\log_2(1+\gamma) < \rho\}$ is the outage probability when $T=1$ given by Proposition \ref{prop1} and the result follows.
\end{proof}

In what follows, we evaluate the energy efficiency of the TD scheme.

\begin{proposition}\label{prop_rtd}
The expected rate achieved by the TD scheme is
\begin{align}\label{rate_td}
R_{\rm TD} = N\int_0^\infty \log_2\left(1+\frac{P}{\sigma_n^2}h\right) F_H(h)^{N-1} f_H(h) dh,
\end{align}
where $f_H(h)$ and $F_H(h)$ are given by \eqref{pdf} and \eqref{cdf}, respectively.
\end{proposition}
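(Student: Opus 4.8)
The plan is to identify the rate of the TD scheme as the expectation of $\log_2(1+\gamma_{(1)})$, where $\gamma_{(1)}$ is the largest of the $N$ per-sub-surface SNRs introduced in the proof of Proposition \ref{prop_td}, and then to invoke the density of the maximum order statistic. First I would argue that the $N$ per-sub-surface channel gains $H_1,\dots,H_N$ are mutually independent and identically distributed: because the sub-surfaces form a non-overlapping partition, each $H_i$ is a function of a disjoint block of the products $\{h_j g_j\}$, and since the fading coefficients are independent across elements, the $H_i$ are independent; moreover each sub-surface contains exactly $m$ elements with identical statistics, so each $H_i$ has the PDF $f_H$ of \eqref{pdf} with $M$ replaced by $m$, and the corresponding CDF $F_H$ of \eqref{cdf}. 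This is exactly the distributional setting already used to obtain $\Pi_{\rm TD}(\rho)=\Pi(\rho,1)^N$ in Proposition \ref{prop_td}.

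Next, since the TD scheme activates the sub-surface with the highest received SNR, the effective gain seen at the destination is $H_{(1)}=\max_{1\leq i\leq N} H_i$ and the effective SNR is $\gamma_{(1)}=\frac{P}{\sigma_n^2}H_{(1)}$, so that $R_{\rm TD}=\E\left\{\log_2\left(1+\frac{P}{\sigma_n^2}H_{(1)}\right)\right\}$. By independence, the CDF of $H_{(1)}$ is $F_{H_{(1)}}(h)=F_H(h)^N$, and differentiating gives the density $f_{H_{(1)}}(h)=N F_H(h)^{N-1} f_H(h)$. Substituting this into $R_{\rm TD}=\int_0^\infty \log_2\left(1+\frac{P}{\sigma_n^2}h\right) f_{H_{(1)}}(h)\,dh$ yields \eqref{rate_td}.

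I do not expect a genuine analytical obstacle here; the argument is the standard order-statistics computation. The only points that require care are confirming the independence of the sub-surface gains (which rests on the non-overlapping partition and the element-wise independence of $h_i,g_i$) and making sure the per-sub-surface distributions $f_H,F_H$ are taken with parameter $m$ rather than $M$, consistently with Proposition \ref{prop_td}. If one wanted a fully rigorous justification of exchanging expectation and the order-statistic representation, it suffices to note that $\log_2(1+\gamma_{(1)})\geq 0$ and apply the layer-cake / monotone convergence argument, but this is routine.
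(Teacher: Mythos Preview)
Your proposal is correct and follows essentially the same route as the paper's proof: express $R_{\rm TD}$ as the expectation of $\log_2(1+\tfrac{P}{\sigma_n^2}H_{(1)})$ and substitute the density $p_{H_{(1)}}(h)=N F_H(h)^{N-1} f_H(h)$ of the largest order statistic. The paper simply cites the order-statistic PDF from \cite{SLIM}, whereas you additionally spell out the independence of the sub-surface gains and the use of parameter $m$, which is a welcome clarification but not a different argument.
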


\begin{proof}
See Appendix \ref{prop_rtd_prf}.
\end{proof}

Therefore, the energy efficiency achieved by the TD scheme is
\begin{align}\label{ee_td}
\eta_{\rm TD} = \frac{R_{\rm TD}}{P/\xi + P_S + P_D +P_{\rm IRS}},
\end{align}
where the power consumption parameters are defined as before but with $P_{\rm IRS} = m P_E$, since only $m$ elements are activated at each time slot. It is clear, that $\eta_{\rm SB} = \eta_{\rm CB}$ when $N=T=1$. On the other hand, for $N = T > 1$, the denominator of \eqref{ee_td} is always less that the one of \eqref{ee_rrc}, since $m < M$. Therefore, as $M$ increases, the TD scheme becomes more energy efficient. Note that for $N=1$, Proposition \ref{prop_rtd} provides the expected rate for a randomly selected sub-surface.

\subsection{Adaptive Transmit Diversity Scheme}
We now consider the ATD scheme, where the IRS selects a sub-surface which achieves an SNR at the destination of at least $\psi$ \cite{YANG}. Initially, the IRS activates a random sub-surface and the destination feeds back one bit, representing whether or not the received signal achieved the threshold $\psi$. In case of a positive feedback, the IRS selects that sub-surface for the remaining communication period; otherwise, the same process is repeated with a different sub-surface. If the first $N-1$ sub-surfaces do not satisfy the selection criterion, then the IRS selects the $N$-th sub-surface, regardless of its achieved SNR.

Without loss of generality, assume that the IRS activates the sub-surfaces in an order which achieve SNRs $\gamma_1, \gamma_2, \dots, \gamma_{N-1}$. Therefore, the average number of feedback bits needed are
\begin{align}
b_{\rm ATD} &= 1 + \sum_{i=1}^{N-2} \PP\{\log_2(1+\gamma_i) < \psi\}\nonumber\\
&= 1 + (N-2) \Pi(\psi,1),
\end{align}
where $\Pi(\psi,1)$ is given by Proposition \ref{prop1}. The outage probability for this scheme, is given by the following proposition.

\begin{proposition}\label{prop_atd}
The outage probability of the ATD scheme is
\begin{align}\label{atd}
\Pi_{\rm ATD}(\rho,\psi) &= \Pi(\psi,1)^{N-1}\Pi(\rho,1)\nonumber\\
&\quad+ \mathds{1}_{\rho > \psi}(\Pi(\rho,1)-\Pi(\psi,1)) \sum_{k=0}^{N-2} \Pi(\psi,1)^k,
\end{align}
where $N$ is the number of sub-surfaces with $m$ elements and $\Pi(\rho,1)$ is the outage probability of a random selection given by Proposition \ref{prop1} with $M=m$.
\end{proposition}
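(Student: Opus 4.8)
The plan is to condition on which of the $N$ sub-surfaces the IRS ultimately fixes for the communication period, compute the outage probability given each such event, and combine by the law of total probability.

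First I would fix the shorthand $p \triangleq \Pi(\psi,1)$ and $q \triangleq \Pi(\rho,1)$, both evaluated with $M=m$ according to Proposition \ref{prop1}, and note that the per-sub-surface rates $\log_2(1+\gamma_1),\dots,\log_2(1+\gamma_N)$ are i.i.d.\ (each sub-surface consists of $m$ elements with independent Rayleigh coefficients). This independence is exactly what makes the assumed activation order of the sub-surfaces immaterial. Next I would recast the selection rule as a stopping event: for $1\le k\le N-1$, sub-surface $k$ is the one retained precisely when sub-surfaces $1,\dots,k-1$ all fail the threshold $\psi$ and sub-surface $k$ meets it, an event of probability $p^{k-1}(1-p)$; sub-surface $N$ is retained exactly when all of the first $N-1$ fail, an event of probability $p^{N-1}$. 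These $N$ events are disjoint, and they are exhaustive since $\sum_{k=1}^{N-1}p^{k-1}(1-p)+p^{N-1}=1$.

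Then I would evaluate the conditional outage for each case. For sub-surface $N$, it is used regardless of its SNR and is independent of the earlier sub-surfaces, so the conditional outage is simply $\PP\{\log_2(1+\gamma_N)<\rho\}=q$. For $k\le N-1$, retaining sub-surface $k$ conveys the information $\log_2(1+\gamma_k)\ge\psi$ and nothing more about $\gamma_k$ (the earlier activations are independent of $\gamma_k$), so the conditional outage equals $\PP\{\log_2(1+\gamma_k)<\rho\mid \log_2(1+\gamma_k)\ge\psi\}$, which is $0$ when $\rho\le\psi$ (the events are then nested and incompatible) and $(q-p)/(1-p)$ when $\rho>\psi$; compactly, $\mathds{1}_{\rho>\psi}(q-p)/(1-p)$. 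Assembling via total probability,
\[
\Pi_{\rm ATD}(\rho,\psi)=p^{N-1}q+\sum_{k=1}^{N-1}p^{k-1}(1-p)\,\mathds{1}_{\rho>\psi}\frac{q-p}{1-p}=p^{N-1}q+\mathds{1}_{\rho>\psi}(q-p)\sum_{k=0}^{N-2}p^{k},
\]
which is the claimed identity once $p=\Pi(\psi,1)$ and $q=\Pi(\rho,1)$ are substituted back.

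The step I expect to be the crux is the conditional-probability computation for $k\le N-1$: one must argue carefully that conditioning on the selection event restricts $\gamma_k$ only through $\{\log_2(1+\gamma_k)\ge\psi\}$ (invoking independence across sub-surfaces), and then split cleanly into the $\rho\le\psi$ and $\rho>\psi$ regimes so that the factor $1-p$ cancels; everything else is geometric-sum bookkeeping.
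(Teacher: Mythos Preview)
Your proposal is correct and is essentially the same argument as the paper's. The only cosmetic difference is that the paper computes the joint probabilities $\PP\{\text{select }k\text{ and outage}\}=p^{k-1}(q-p)$ directly (writing $\PP\{\psi\le\log_2(1+\gamma_k)<\rho\}=q-p$) rather than routing through the conditional probability $(q-p)/(1-p)$ and then canceling the $(1-p)$ factor; the enumeration of events and the case split on $\rho\lessgtr\psi$ are identical.
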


\begin{proof}
See Appendix \ref{prop_atd_prf}.
\end{proof}

We can observe from Proposition \ref{prop_atd}, that an increase in $N$ is always beneficial for the case $\rho \leq \psi$. However, when $\rho > \psi$, the second term in \eqref{atd} increases with $N$. In addition, the case $\psi = \rho$, describes the TD scheme. Now, for the expected rate of this scheme, we can write
\begin{align}
R_{\rm ATD} &= \sum_{k=0}^{N-2} \Pi(\psi,1)^k \int_\psi^\infty \log_2\left(1+\frac{P}{\sigma_n^2}h\right) f_H(h) dh\nonumber\\
&\quad+ \Pi(\psi,1)^{N-1} \int_0^\infty \log_2\left(1+\frac{P}{\sigma_n^2}h\right) f_H(h) dh,
\end{align}
where $f_H(h)$ is given by \eqref{pdf}. We omit the proof for brevity as it follows a similar approach as the proof of Proposition \ref{prop_rrc}.

Note that this scheme could be generalized, in the sense that the IRS could stop after activating $K$ sub-surfaces, with $K \leq N-1$. The considered case provides the upper bound in terms of performance, but our analysis could be generalized by simply setting $N = K+1$. Finally, the energy efficiency $\eta_{\rm ATD}$ of the ATD scheme is simply given by \eqref{ee_td} but with expected rate $R_{\rm ATD}$ provided above.

\subsection{Diversity Analysis}
We now derive the diversity order and coding gain of the selection-based schemes; the proofs can be found in Appendix \ref{div_sbs}. Specifically, the TD scheme, achieves full spatial diversity order, i.e. $d_{\rm TD} = N$, as expected. Moreover, its achieved coding gain is equal to
\begin{align}\label{ag_td}
G_{\rm TD} = \left(\frac{\sigma_n^2}{m}\right)^N (2^\rho-1)^N.
\end{align}
To compare $G_{\rm TD}$ with $G_{\rm RRC}$, we need to consider the case $T = N$, i.e. equal diversity order. If $m = M/N$, then it is clear that $G_{\rm TD} > G_{\rm RRC}$. However, if each sub-surface employs $M$ elements then $G_{\rm TD} < G_{\rm RRC}$; in this case, the selection-based scheme employs more elements ($MN$ in total) but activates the same number as the coding-based scheme.

Finally, the diversity order of the ATD scheme depends on whether or not $\rho \leq \psi$. In particular, if $\rho \leq \psi$, it is clear from Proposition \ref{prop_atd} that the diversity order is $N$ with a coding gain
\begin{align}
G_{\rm ATD} = \left(\frac{\sigma_n^2}{m}\right)^N (2^\psi-1)^{N-1}(2^\rho-1) \geq G_{\rm TD},
\end{align}
where equality holds for $\psi = \rho$. On the other hand, if $\rho > \psi$, the second term of \eqref{atd} dominates and so the achieved diversity is one.

\subsection{Limiting Distribution}
Next, we consider the asymptotic behavior of the TD scheme as $N$ increases. Clearly, when $N \to \infty$ then $M \to \infty$, which corresponds to a massive multiple-element configuration and is a case of practical interest \cite{EB, QN}.

Now, based on extreme value theory, when the selection is done over a large number of sub-surfaces, the limiting distribution of the largest order statistic can be one of three domains of attraction, namely, the Fr\'echet, the Weibull and the Gumbel distribution \cite{SLIM}. In our case, using Lemma \ref{lemma}, we can easily prove that the parent distribution satisfies
\begin{equation}
\lim_{x\to\infty} \frac{1-F_H(x)}{f_H(x)} = c,
\end{equation}
where $c > 0$ is a constant. As a result, $\Pi_{\rm TD}(\rho)$ converges to a Gumbel distribution, i.e.
\begin{equation}
\Pi_{\rm TD}(\rho) = G\left(\frac{\theta\sigma_n^2/P-b_N}{a_N}\right),
\end{equation}
where $G(x)$ is given by
\begin{equation}\label{gumbel}
G(x) = \exp(-\exp(-x)), -\infty < x < \infty.
\end{equation}
Moreover, $a_N$ and $b_N$ are normalizing constants satisfying the following condition
\begin{equation}
\lim_{N \to \infty} F_H(a_N x + b_N) = G(x),
\end{equation}
where $F_H(\cdot)$ is given by Lemma \ref{lemma}. These constants can be computed by solving the following
\begin{align}
1-F_H(b_N) = \frac{1}{N},
\end{align}
and
\begin{align}
1-F_H(a_N + b_N) &= \frac{1}{eN},
\end{align}
where $e$ is Euler's number, which results in $a_N = \sigma^2 m$ and $b_N = \sigma^2 m\log(N)$. Therefore, we have
\begin{align}
\Pi_{\rm TD}(\rho) &= \exp\left(-\exp\left(-\frac{\theta\sigma_n^2/P-\sigma^2m\log(N)}{\sigma^2m}\right)\right)\nonumber\\
&= \exp\left(-N\exp\left(-\frac{\theta\sigma_n^2}{\sigma^2mP}\right)\right).
\end{align}

\begin{table*}[t]\caption{Summary of IRS Schemes}\centering{\tabulinesep=0.5mm
		\begin{tabu}{|c||c|c|c|c|c|}\hline
			\textbf{~} & \textbf{CT} & \textbf{RRC} & \textbf{OBF} & \textbf{TD} & \textbf{ATD}\\\hline
			\textbf{Diversity} & $M$ & $\min(T,M)$ & $\min(\tau,M)$ & $N$ & \makecell{$N \text{~if~} \rho \leq \psi$,\\ $1 \text{~if~} \rho > \psi$}\\\hline
			\textbf{Signaling (bits)} & $k M, k \in \mathbb{Z}^+$ & $0$ & $\tau$ & $\lceil \log_2(N) \rceil$ & $1+(N-2)\Pi(\psi,1)$\\\hline
			\textbf{Active elements} & $M$ & $M$ & $M$ & $M/N$ & $M/N$\\\hline
			\textbf{Training phase} & Yes & No & No & Yes & Yes\\\hline
			\textbf{Pre-log factor}& $1-\tau/T$ & $1$ & $1$ & $1-\tau/T$ & See Section \ref{comp}\\\hline
			$\sigma_e^2$ & $(1+\frac{\tau}{M} P_o)^{-1}$ & $-$ & $-$ & $-$ & $-$\\\hline
			\textbf{Phase-shift values} & Continuous & Continuous & Continuous & Discrete & Discrete\\\hline
	\end{tabu}}\label{table}
\end{table*}

\section{Implementation Issues \& Comparison}\label{comp}
In this section, we provide a discussion regarding the implementation issues of the proposed schemes and present a brief comparison between their benefits and capabilities. We will consider as the main benchmark the case of coherent transmission (CT), i.e. the case where the phases are aligned, with perfect and imperfect CSI. The implementation of CT (beamforming) depends on the accuracy of a channel estimation protocol, which takes place before every communication phase \cite{NADEEM}. Channel estimation in IRS-aided communications is a non-trivial task, due to the fact that the IRS passively reflects the incident signals and does not have any signal processing capabilities. For our analysis, we consider a channel estimation protocol based on an MMSE approach \cite{NADEEM}. Assume a training phase of duration $\tau$, divided into $M$ sub-phases (one for each IRS element) of duration $\tau/M$. During the $i$-th sub-phase, the IRS switches the $i$-th element on (i.e. $\beta_i = 1$) while keeping all the other elements off (i.e. $\beta_j = 0$, $j \neq i$). Then, the destination transmits a pilot symbol $x_i$, where $|x_i|^2 = P_o, \forall ~i$ and $x_i, x_j, i\neq j$ are mutually orthogonal. In this way, the source can estimate the cascaded channel through the $i$-th element. Based on this approach, we can state the following theorem.

\begin{theorem}\label{propc}
The outage probability achieved by the CT (beamforming) with $M$ elements under imperfect CSI is given by
\begin{align}
\Pi_{\rm CT}(\rho) = \frac{1}{\pi} \int_0^\infty \Im\Bigg\{\phi(t) &\sum_{i=2M}^\infty (-1)^{i+1}\frac{t^{i-1}}{i!}\nonumber\\
&\times \left(\jmath \sqrt{\frac{\theta(\sigma_n^2+\sigma_e^2)}{P}}\right)^i\Bigg\} dt,
\end{align}
with
\begin{align}\label{cfH}
\phi(t) = \left(\frac{4\sqrt{t^2 s + 4} + 2\jmath t \sqrt{s}\left(\pi + 2\jmath \sinh^{-1}\left(\frac{t\sqrt{s} }{2}\right)\right)}{\left(t^2 s+4\right)^{3/2}}\right)^M,
\end{align}
where $s=\sigma^2(1-\sigma_e^2)$ and $\sigma_e^2 = 1/(1+\tau P_o/M)$.
\end{theorem}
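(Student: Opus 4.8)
The plan is to reduce the outage event to the lower tail of the sum $A \triangleq \sum_{i=1}^M |\hat{c}_i|$ of the estimated cascaded channel magnitudes, obtain the characteristic function of $A$ in closed form, and recover the CDF by Fourier inversion. First I would make the imperfect-CSI model explicit: with the pilot-based MMSE scheme described above (training length $\tau$ split into $M$ sub-phases of length $\tau/M$, pilot power $P_o$), the normalized per-element estimation-error variance is the standard $\sigma_e^2 = (1 + \tau P_o/M)^{-1}$, the resolvable part of $h_i g_i$ carries variance scaled by $1-\sigma_e^2$, and, absorbing the uncorrelated, circularly symmetric residual error into the noise, the effective noise variance becomes $\sigma_n^2 + \sigma_e^2$. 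Since beamforming aligns $\phi_i$ with $\arg(\hat{c}_i)$, the post-combining SINR is $\gamma = P A^2/(\sigma_n^2 + \sigma_e^2)$, so
\begin{align*}
\Pi_{\rm CT}(\rho) = \PP\{\gamma < \theta\} = \PP\!\left\{A < \sqrt{\tfrac{\theta(\sigma_n^2+\sigma_e^2)}{P}}\right\} = F_A\!\left(\sqrt{\tfrac{\theta(\sigma_n^2+\sigma_e^2)}{P}}\right),
\end{align*}
with $\theta = 2^\rho - 1$ and $F_A$ the CDF of $A$.

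Next I would compute the characteristic function of $A$. The summands $|\hat{c}_i|$ are i.i.d.\ and equal in distribution to $\sqrt{s}\,R_1 R_2$ with $R_1,R_2$ independent Rayleigh variables (so $|\hat c_i|^2$ has the law of $s$ times a product of two unit exponentials) and $s = \sigma^2(1-\sigma_e^2)$. I would evaluate $\phi_1(t) = \E\big[\exp(\jmath t \sqrt{s}\,R_1 R_2)\big]$ by conditioning on $R_2$, which reduces the inner expectation to the characteristic function of a Rayleigh variable (a closed form involving the imaginary error function $\mathrm{erfi}$), and then integrating over $R_2$ using $\int_0^\infty r^2 e^{-cr^2}\,dr$ and $\int_0^\infty r^2 e^{-cr^2}\,\mathrm{erfi}(\lambda r)\,dr$ with $c = (4+t^2 s)/4$ and $\lambda = t\sqrt{s}/2$. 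The crucial simplification is $c - \lambda^2 = 1$, which collapses the $\mathrm{arctanh}$ produced by the $\mathrm{erfi}$-integral into $\sinh^{-1}(t\sqrt{s}/2)$ and, after assembling the pieces over the common denominator $(4+t^2 s)^{3/2}$, yields exactly the bracketed expression in \eqref{cfH}. By independence of the $M$ elements, $\phi(t) = \phi_1(t)^M$, which is \eqref{cfH}.

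Finally, I would invert. Using the characteristic-function inversion for the CDF of a nonnegative random variable,
\begin{align*}
F_A(x) = \frac{1}{\pi}\int_0^\infty \Im\!\left[\phi(t)\,\frac{1-e^{-\jmath x t}}{t}\right] dt,
\end{align*}
and writing $\frac{1-e^{-\jmath x t}}{t} = \sum_{i=1}^\infty (-1)^{i+1}\frac{t^{i-1}}{i!}(\jmath x)^i$, I would split off the partial sum up to $i = 2M-1$ and show it contributes nothing. This rests on two facts: (i) $\phi(t) = O(t^{-2M})$ at infinity, so $\phi(t)t^{i-1}$ is integrable for $i \le 2M-1$; and (ii) $F_A$ vanishes to order $2M$ at the origin — equivalently $f_A$ together with its first $2M-2$ derivatives vanish at $0$ — because the product-of-two-Rayleighs density behaves like $w\log(1/w)$ near $0$, so its $M$-fold convolution $f_A$ behaves like $a^{2M-1}(\log(1/a))^{M-1}$. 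Combining the parity of the coefficients $(\jmath x)^i$ with the distributional identities for $\int_0^\infty t^{i-1}\cos(at)\,dt$ and $\int_0^\infty t^{i-1}\sin(at)\,dt$, each of the $i=1,\dots,2M-1$ integrals vanishes, leaving $F_A(x) = \frac1\pi\int_0^\infty \Im\big[\phi(t)\sum_{i\ge 2M}(-1)^{i+1}\frac{t^{i-1}}{i!}(\jmath x)^i\big]\,dt$; substituting $x = \sqrt{\theta(\sigma_n^2+\sigma_e^2)/P}$ gives the theorem.

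The hard part will be twofold. The first genuine hurdle is the closed-form evaluation of $\phi_1(t)$: it is a mechanical but delicate double integral over two Rayleigh variables, and getting the branch of $\sinh^{-1}$, the exponent $3/2$, and all constants to land exactly on \eqref{cfH} depends on carefully exploiting $c-\lambda^2=1$ and the precise form of the $\mathrm{erfi}$-integral. The second, and conceptually subtler, hurdle is justifying the lower summation limit $i=2M$: this requires controlling the polynomial decay of $\phi(t)$ at infinity and the behavior of the $M$-fold convolution of the product-Rayleigh density near the origin — including the logarithmic corrections — in order to kill the first $2M-1$ terms of the inversion series. I would expect pinning down the near-origin order of $f_A$ to be where most of the care is needed.
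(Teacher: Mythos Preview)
Your plan follows the paper's overall architecture: write the outage as $\PP\{A<\sqrt{\theta(\sigma_n^2+\sigma_e^2)/P}\}$, obtain the characteristic function of $A$ by raising the single-term CF to the $M$-th power, apply Gil--Pelaez, and expand $e^{-\jmath xt}$ as a power series. Two of your technical steps, however, diverge from what the paper does.

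For the closed form of $\phi_1(t)$, you propose the iterated Rayleigh integration via $\mathrm{erfi}$ and the identity $c-\lambda^2=1$. The paper instead starts from the known density $f_{|\widehat{hg}|}(x)=\tfrac{4x}{s}K_0(2x/\sqrt{s})$ of a product of two Rayleighs and evaluates $\int_0^\infty e^{\jmath tx}\,x\,K_0(2x/\sqrt{s})\,dx$ in one shot using the table entry \cite[6.624-1]{ISG}, followed by the substitutions $\log(\jmath)=\jmath\pi/2$ and $\log(x+\sqrt{x^2+1})=\sinh^{-1}(x)$ to land on \eqref{cfH}. Your route should arrive at the same expression, but the $K_0$ route is shorter and avoids the branch bookkeeping you flag as delicate.

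For the lower summation limit $i=2M$, the paper does \emph{not} give the analytic argument you outline; it simply asserts that the first $2M-1$ terms of the series vanish and validates this numerically (Fig.~\ref{fig9}). Your proposal --- using $\phi(t)=O((\log t)^M t^{-2M})$ to guarantee integrability of $\phi(t)t^{i-1}$ for $i\le 2M-1$, and the fact that the $M$-fold product-Rayleigh convolution behaves like $a^{2M-1}(\log(1/a))^{M-1}$ near $0$ so that $f_A^{(k)}(0)=0$ for $k\le 2M-2$ --- is the right analytic justification and is a genuine strengthening of the paper's proof. Concretely, the vanishing of the $i$-th term reduces (via $\Im[\jmath w]=\Re[w]$) to $\int_0^\infty \Re[(-\jmath t)^{i-1}\phi(t)]\,dt=\pi f_A^{(i-1)}(0)=0$, which your near-origin estimate supplies; the ``distributional identities for $\int_0^\infty t^{i-1}\cos(at)\,dt$'' you mention are not needed once you phrase it this way.
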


\begin{proof}
See Appendix \ref{propc_prf}.
\end{proof}

The perfect CSI case corresponds to $\sigma_e^2 = 0$. It follows that the energy efficiency of CT is
\begin{align}
\eta_{\rm CT} = \frac{(1-\tau/T)R_{\rm CT}}{(1-\tau/T)(P/\xi + P_{\rm IRS}) + \tau M P_o/T + P_S + P_D},
\end{align}
where $R_{\rm CT} = \mathbb{E}\{\log_2(1+\gamma_{\rm CT})\} = \int_0^\infty (1-\Pi_{\rm CT}(\rho)) d\rho$ is the achieved rate; note that the factors $1-\tau/T$ and $\tau M P_o/T$ take into account the rate loss and power consumption due to the training process, respectively.

We now provide a discussion as to how our proposed schemes compare to the above approach. In particular, the RRC scheme does not require a training phase since no CSI is needed. Therefore, it is not affected by any rate losses (pre-log factor is $1$), estimation errors ($\sigma_e^2 = 0$) or further power consumption due to channel training ($P_o = 0$). The OBF scheme operates in a similar way, but requires signaling (feedback of $\tau$ bits) to implement. The TD scheme has a training phase of duration $\tau$ and, similar to the CT scheme, each sub-surface is turned on sequentially in order to select the one with the highest RSS. However, in contrast to the CT scheme, it requires less power consumption since there are $N < M$ sub-phases (i.e. $\tau N P_o < \tau M P_o$) and there are no estimation errors ($\sigma_e^2 = 0$) due to the RSS-based selection approach. Finally, the ATD scheme behaves similarly to the TD scheme but the pre-log factor varies depending on which sub-surface achieves the required threshold $\psi$.

The above comparison is summarized in Table \ref{table}. Note that we do not take into account the costs (in terms of rate and power consumption) due to signaling. However, it is clear from the table, that the CT scheme requires a larger number of bits at each sub-phase compared to our proposed schemes. In addition, the random rotation of the phases at each IRS element guarantees that the complexity at the IRS is kept low, independently of the phases' resolution. However, the CT, RRC and OBF scheme require continuous phase shifts whereas the selection-schemes do not have this restriction. Finally, it is important to point out, that a coding-based scheme can easily be jointly implemented with a selection-based scheme; for example, the RRC scheme could be employed with the TD scheme on the selected sub-surface. We consider them separately so as to emphasize the benefits of each approach but their joint consideration is a simple extension of the derived analytical results.

\section{Numerical Results}\label{numerical}

We now validate our theoretical analysis and main analytical assumptions with computer simulations and show the benefits of our proposed schemes. For the sake of presentation, we consider $\rho = 1$ bps/Hz, $\sigma^2 = 0$ dB, $\sigma_n^2 = 0$ dB, $\xi = 1.2$, $P_E = 10$ dBm, $P_D = 10$ dBm and $P_S = 9$ dBW \cite{CH}. Moreover, all of the proposed schemes are compared with the conventional non-coherent case (Proposition \ref{prop1}), i.e. $T=1$, $N=1$, and with the CT case (Theorem \ref{propc}). Unless otherwise stated, lines correspond to theoretical results whereas markers correspond to simulation results.

\begin{figure}[t]\centering
  \includegraphics[width=\linewidth]{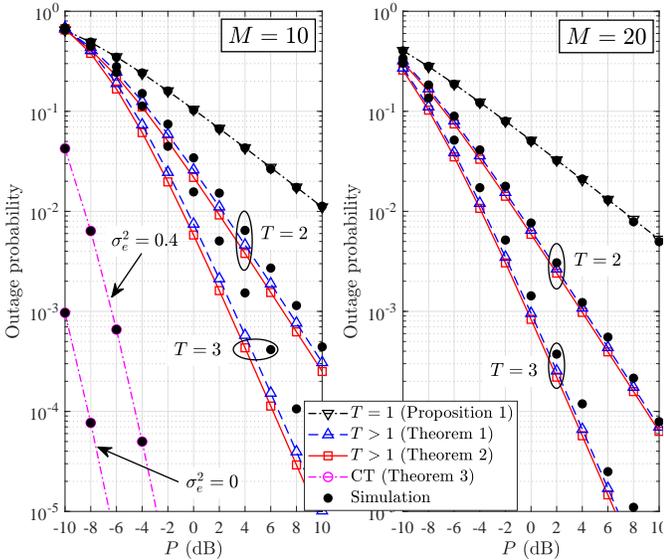}
  \caption{Outage probability versus $P$ for the RRC scheme.}\label{fig3}
\end{figure}

Fig. \ref{fig3} depicts the outage probability achieved by the RRC scheme in terms of the transmit power $P$, the number of channel uses $T$ and the number of reflecting elements $M$. As expected, the performance is improved with an increase of $M$. Moreover, and most importantly, increasing the number of channel uses $T$ provides significant gains to the outage probability. It should be highlighted that the massive benefits of this scheme are evident from $T=2$, which is the smallest number of channel uses the scheme can employ. Indeed, the outage probability is reduced around $96\%$ and $99\%$ for $T=2$ and $T=3$, respectively, compared to the conventional case with $M=10$. We can observe that the scheme provides full diversity order $T$, since $T < M$, as deduced by our analysis. Fig. \ref{fig3} also illustrates the performance of the CT scheme for $\sigma_e^2 = 0$ and $\sigma_e^2 = 0.3$. The case $\sigma_e^2 = 0$ provides the lower bound, as expected. On the other hand, in the presence of errors, the performance significantly deteriorates. It is clear that, by increasing $T$, our scheme will get closer to the performance of the CT. Finally, the figure validates the considered assumptions and approximations of our theoretical study. Specifically, the simulations perfectly match the theoretical results of Proposition \ref{prop1} and Theorem \ref{propc}. Furthermore, for $T>1$, the expressions of Theorem \ref{thm_rrc1} and Theorem \ref{thm_rrc2}, approximate the achieved outage probability exceptionally well even for small $M$ and the approximations become tighter as $M$ increases.

Fig. \ref{fig4} shows the achieved outage probability of the OBF scheme with $M = 10$ and for the cases $\tau = 40$ with $\delta = 0.1$ and $\tau = 100$ with $\delta = 0.05$. Note that, in contrast to Fig. \ref{fig3}, we consider larger values for $T$ since $T \geq \tau$. The performance of the OBF scheme is compared to the CT case, which is the best scenario that can be achieved; the theoretical and simulation results for the CT case agree, which verifies our analysis. As an additional benchmark, we consider the transmit beamforming through the equivalent multiple-input single-output (MISO) cascaded channel. For this case, we follow a $1$-bit channel learning approach based on the analytic center cutting plane method (ACCPM) \cite[Sec. IV]{RUI4}. Also, we show the perfect MISO beamforming case, i.e. when $\gamma_t = \frac{P}{\sigma_n^2} \sum_{i=1}^M \lvert h_i \rvert^2 \lvert g_i \rvert^2$. It can be observed that, for both training periods, the OBF scheme outperforms the ACCPM-based and the perfect MISO beamforming. Moreover, as both $\tau$ and $T$ increase, the OBF gets closer to the CT case, which justifies our claims in Section \ref{obf}. However, for a fixed $\tau$, the outage probability converges to a lower bound as $T$ increases. Thus, for a low-complexity scenario, $T$ does not need to be much larger than $\tau$.

\begin{figure}[t]\centering
	\includegraphics[width=\linewidth]{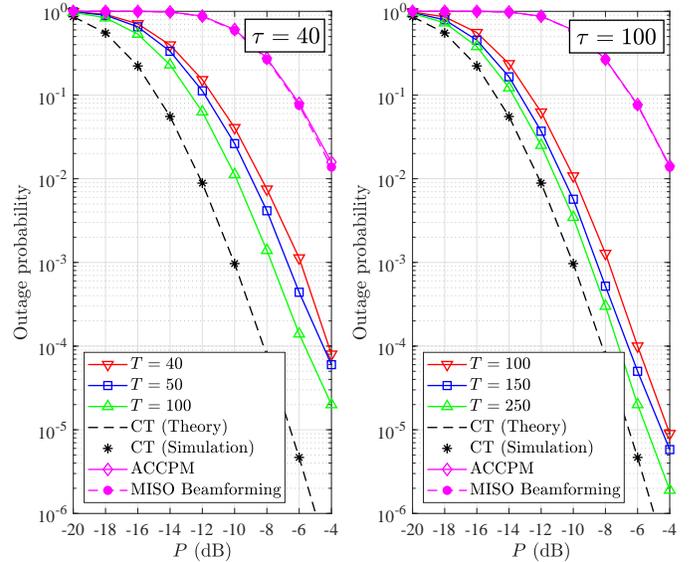}
	\caption{Outage probability versus $P$ for the OBF scheme; $M=10$, $\sigma_e^2 = 0$.}\label{fig4}
\end{figure}

Fig. \ref{fig5} illustrates the performance of the selection-based TD scheme with regards to the number of reflecting elements $M$ and the number of sub-surfaces $N$. Again, the theory (lines) and simulation (markers) are in agreement, which validates our analysis. In addition, we show the approximation of the TD scheme through the CLT, given by \eqref{td_app}. We can see that the approximation follows the behavior of the curves very well and it matches the simulation for high values of $M$ ($N=1$), which validates the consideration of Lemma \ref{lemma}. It is clear that the selection process improves the performance as $N$ increases, especially in the high SNR regime, where the scheme achieves full spatial diversity order. The performance of the CT scheme ($\sigma_e^2=0$) for $M=4$ and $M=10$ is also depicted. For a fair comparison, i.e. same number of active elements, we consider the TD scheme with $M=20, N=5$ and $M=20, N=2$. The CT scheme outperforms the TD scheme with $10$ active elements. However, we can see that the TD scheme provides significant gains when only $4$ elements are used. This shows how this scheme is both energy efficient and of low-complexity but can still provide significant performance gains.

\begin{figure}[t]\centering
	\includegraphics[width=\linewidth]{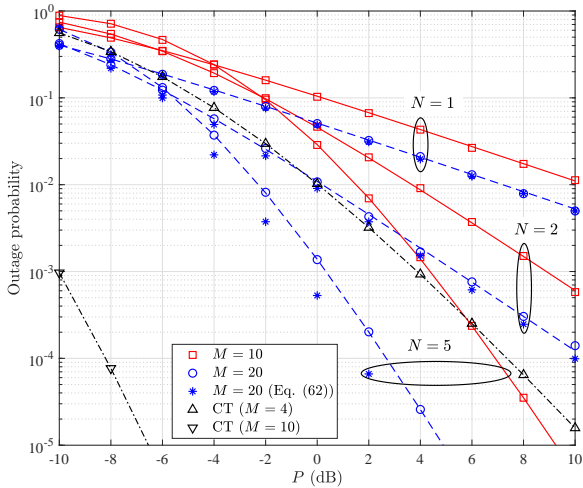}
	\caption{Outage probability versus $P$ for the TD scheme; $\sigma_e^2 = 0$.}\label{fig5}
\end{figure}

\begin{figure}[t]\centering
	\includegraphics[width=\linewidth]{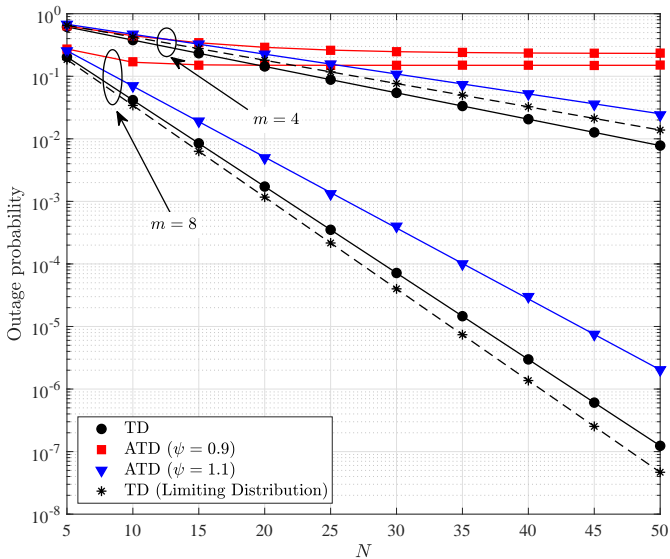}
	\caption{Outage probability versus $N$ for the selection-based schemes; $P = -10$ dB.}\label{fig6}
\end{figure}

Fig. \ref{fig6} depicts the outage probability for the selection-based schemes, in terms of the number of sub-surfaces $N$. In contrast to the other cases, the performance of the ATD scheme with $\psi = 0.9$, diminishes as $N$ increases. Since $\psi = 0.9 < \rho$, the IRS could select a sub-surface with achieved SNR greater than $\psi$ but lower than $\rho$. As $N$ increases, this selection is more likely and so the outage probability increases as well. On the other hand, when $\psi = 1.1 > \rho$, the outage probability decreases with $N$, since a selection in this case implies that the destination will not be in outage. These observations can also be derived from the analytical expression in Proposition \ref{prop_atd}. It is important to note that, even though the TD scheme outperforms the ATD scheme, it's implementation may require more bits of feedback. Fig. \ref{fig6} also shows the performance of TD using the limiting distribution. Despite deriving the limiting distribution using Lemma \ref{lemma}, we can see that it still describes the system's behavior very well.

Finally, Fig. \ref{fig7} shows the energy efficiency of the proposed schemes for different values of $M$ as well as for $P = -10$ dB (left sub-figure) and $P = 0$ dB (right sub-figure). The first main observation, is that the energy efficiency initially increases with $M$ but, after a certain value of $M$, it starts to decrease. This is expected, since the rate grows logarithmically but the power consumption grows linearly with $M$. Secondly, the energy efficiency of the RRC scheme is the same as with the conventional case (Proposition \ref{prop1}). As shown in Section \ref{rrc}, on average the rates of the two scenarios are equal. We then compare the OBF, the TD and the CT schemes with training periods $\tau = 10$ and $\tau = 20$. The OBF scheme outperforms all other schemes since at each time instant, the destination experiences a higher channel gain and gets closer to the beamforming gain and is not affected by a rate loss factor (see Section \ref{comp}). It is important to note here that our analytical approach for the OBF scheme provides a close approximation. The TD scheme has a smaller energy efficiency for small values of $M$, compared to the other cases. However, as $M$ increases, the TD scheme becomes more energy efficient than the RRC and CT schemes, due to the fact that it activates a fraction of the available elements at the IRS. Therefore, the energy efficiency of the TD scheme will start to decrease at larger values of $M$ compared to the other schemes; this is clearly more evident for the case $P=0$ dB.

\begin{figure}[t]\centering
	\includegraphics[width=\linewidth]{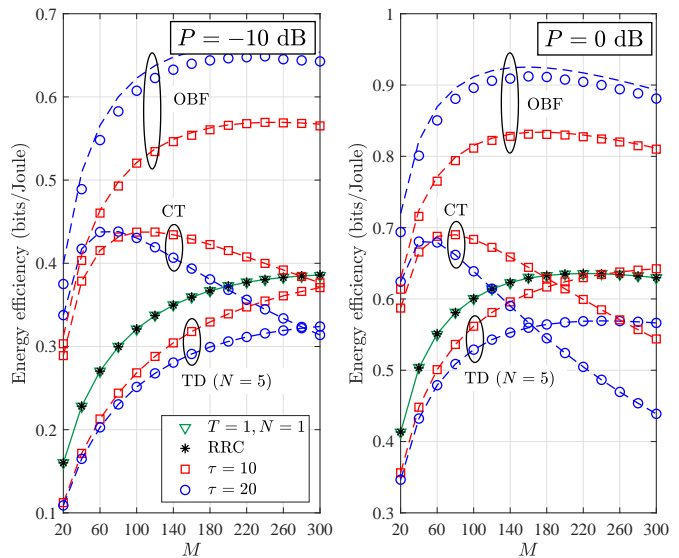}
	\caption{Energy efficiency versus number of elements $M$; $T = 500$, $P_o = 0$ dB, $\kappa = 0.6$, $\Delta = 0.1$.}\label{fig7}
\end{figure}

\section{Conclusion}\label{conclusion}
In this paper, we presented an analytical framework for random rotation-based IRS-aided communications and presented four low-complexity schemes that do not require instantaneous knowledge of any CSI. In particular, we proposed two coding-based schemes, which produce a time-varying channel through time-varying random rotations. Moreover, we proposed two selection-based schemes, which activate a partition of the IRS elements based on received signal power at the destination. Analytical expressions were derived for the outage probability and energy efficiency of all the proposed schemes. Moreover, the diversity order together with the coding gain achieved by each scheme was provided. Our results demonstrated that the proposed schemes provide significant performance gains compared to the conventional case, whilst keeping the complexity low and the energy efficiency high.

\appendices
\section{Proof of Proposition \ref{prop1}}\label{prop1_prf}
Since $T = 1$, the channel gain from the $M$ elements of the IRS is $H_1 = \Big\lvert \sum_{i=1}^M h_i g_i \exp(j\phi_{1,i})\Big\rvert^2$. Due to the random rotations, the phases have no effect on the channel gain. In other words, $H_1$ is statistically equivalent to $\Big\lvert \sum_{i=1}^M h_i g_i\Big\rvert^2$. Therefore, the outage probability when $T = 1$ can be evaluated as follows
\begin{align}
\Pi(\rho,1) &= \PP\{\log_2(1+\gamma_1) < \rho\} = \PP\left\{\frac{P}{\sigma_n^2} H_1 < 2^\rho-1\right\}\nonumber\\
&=\PP\left\{\Bigg\lvert \sum_{i=1}^M h_i g_i\Bigg\rvert^2 < \frac{\theta\sigma_n^2}{P}\right\},
\end{align}
where we defined $\theta \triangleq 2^\rho-1$. The final expression is then derived by using
\begin{align}\label{cdf}
F_{H_1}(x) = 1-\frac{2}{\Gamma(M)} \left(\frac{x}{\sigma^2}\right)^{M/2} K_M\left(2 \sqrt{\frac{x}{\sigma^2}}\right),
\end{align}
which is the CDF of $H_1$ \cite{JDG}.

\section{Proof of Theorem \ref{thm_rrc1}}\label{thm_rrc1_prf}
\begin{figure}[t]\centering
  \includegraphics[width=\linewidth]{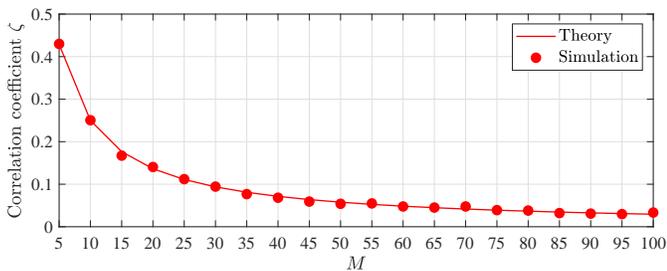}
  \caption{Correlation coefficient $\zeta$ versus $M$.}\label{fig8}
\end{figure}

The random variables $H_t = \Big\lvert\sum_{i=1}^M h_i g_i \exp(\jmath \phi_{t,i})\Big\rvert^2$ are correlated since the channel coefficients $h_i$ and $g_i$ remain constant over all $T$ transmissions. However, as $M$ increases, the correlation between the random variables $H_t$ decreases. In particular, the correlation coefficient $\zeta$ between $H_{t_1}$ and $H_{t_2}$, $t_1 \neq t_2$, is given by
\begin{align}\label{corr_coeff}
\zeta = \frac{\E\{H_{t_1} H_{t_2}\}-\E\{H_{t_1}\}\E\{H_{t_1}\}}{\sigma_{H_{t_1}}\sigma_{H_{t_1}}} = \frac{3}{M+2},
\end{align}
where $\sigma_{H_i} = \sqrt{\E\{H_i^2\}-\E\{H_i\}^2}$, $\E\{H_i\} = \sigma^2 M$, $\E\{H_i^2\} = 2\sigma^4 M(M+1)$ and  $\E\{H_i H_j\} = \sigma^4M(M+3)$. It is clear that for $M\to\infty$ we have $\zeta \to 0$; this is also depicted in Fig. \ref{fig5}. By taking this into consideration, we can evaluate the outage probability as follows
\begin{align}
&\Pi_{\rm RRC}^{\rm IND}(\rho,T) = \PP\left\{\frac{1}{T} \sum_{t=1}^T \log_2\left(1+\gamma_t\right) < \rho\right\}\nonumber\\
&= \PP\left\{\log_2 \prod_{t=1}^T \left(1+\gamma_t\right) < T\rho\right\}\nonumber\\
&= \E_{H_k}\Bigg\{F_{H_1}\left(\frac{\sigma_n^2}{P}\left(\frac{2^{T\rho}}{\prod_{t=2}^T \left(1+PH_t/\sigma_n^2\right)}-1\right)\right)\Bigg\},
\end{align}
which follows by the logarithmic identity $\log_2(x)+\log_2(y)=\log_2(xy)$, solving for $H_1$ and using the CDF of $H_1$ given by \eqref{cdf}. Since the random variables $H_t$ are assumed to be independent, we have
\begin{align}
&\Pi_{\rm RRC}^{\rm IND}(\rho,T) = \int_{z_T} \cdots \int_{z_2} \prod_{t=2}^T f_{H_t}(z_t)\nonumber\\
&\times F_{H_1} \Bigg(\frac{\sigma_n^2}{P}\Bigg(\frac{2^{T\rho}}{\prod_{t=2}^T \left(1+Pz_t/\sigma_n^2\right)}-1\Bigg) \Bigg) dz_2\cdots dz_T,
\end{align}
where $f_{H_t}(z_t)$ is the PDF of $H_t$ given by the derivative of \eqref{cdf}. The integration limits are evaluated by considering the inequality
\begin{align}
\frac{2^{T\rho}}{\prod_{t=2}^T \left(1+Pz_t/\sigma_n^2\right)}-1 > 0,
\end{align}
sequentially for each $z_t$. Finally, the transformation $w_t \to 1+Pz_t/\sigma_n^2$ provides the final expression.

\section{Proof of Lemma \ref{lemma}}\label{lemma_prf}
Let $W_t = \sum_{i=1}^M h_i g_i \exp(\jmath \phi_{t,i}) = X_t + \jmath Y_t$, where $X_t = \sum_{i=1}^M |h_i||g_i| \cos(\phi_{t,i} + \chi_i)$ and $Y_t = \jmath\sum_{i=1}^M |h_i||g_i|\sin(\phi_{t,i}+\chi_i)$. We consider the case of large $M$, which implies that $W_t$ are independent and identically distributed (see Appendix \ref{thm_rrc1_prf}). The mean and variance of each summand are $0$ and $\sigma^2/2$, respectively, which follows from $\E\{\cos(\phi_{t,i}+\chi_i)\} = \E\{\sin(\phi_{t,i}+\chi_i)\} = 0$, $\E\{|h_i|^2\}\E\{|g_i|^2\} = \sigma_h^2\sigma_g^2 = \sigma^2$ and $\E\{\cos^2(\phi_{t,i}+\chi_i)\} = \E\{\sin^2(\phi_{t,i}+\chi_i)\} = 1/2$. Thus, by applying the CLT, we have that $X_t \sim \mathcal{N}(0,\sigma^2 M/2)$ and $Y_t \sim \mathcal{N}(0,\sigma^2 M/2)$. It is straightforward to show that $X_t$ and $Y_t$ are uncorrelated and jointly Gaussian. As such, $W_t$ converges in distribution to a complex Gaussian random variable and so $H_t = |W_t|^2$ is exponentially distributed with parameter $1/(\sigma^2 M)$.

\section{Diversity of RRC Scheme}\label{div_rrc}
Firstly, assume that $M > T$. Then, we can use Theorems \ref{thm_rrc1} and \ref{thm_rrc2} to derive the diversity order. By employing the approximation $K_M(x) \approx \Gamma(M)2^{M-1}/x^M$ for $x \approx 0$ \cite{ISG} in Theorem \ref{thm_rrc1}, we can see that the outage probability $\Pi_{\rm RRC}^{\rm IND}(\rho,T)$ reduces to zero. This is because the convergence to the diversity order for cascaded channels is very slow and is observed for very high SNR values \cite{MURAT}. However, using the approximated expression in Theorem \ref{thm_rrc2} and the fact that $\exp(-x) \approx 1-x$ for $x\approx 0$, we have 
\begin{align}
\lim\limits_{P\to\infty} \Pi_{\rm RRC}^{\rm CLT}(\rho,T) &\approx \left(\frac{\sigma_n^2}{MP}\right)^T \!\int_1^{\xi_T}\! \cdots \int_1^{\xi_2} \left(\frac{2^{\rho T}}{\prod_{t=2}^T w_t}-1\right) \nonumber\\
&\quad\times \prod_{t=2}^T \left(1-\frac{\sigma_n^2}{MP}(w_t-1)\right) dw_2\cdots dw_T\nonumber\\
&\quad\to O(1/P^T),\nonumber
\end{align}
where it is clear that the expansion of the second product will be a sum of $2^{T-1}$ terms, out of which only the term equal to one will not contain $1/P$. Therefore, as the smallest order term will dominate the others, it follows that the RRC scheme achieves spatial diversity of order $T$ with coding gain $G_{\rm RRC}$ equal to
\begin{align}
G_{\rm RRC} &= \left(\frac{\sigma_n^2}{M}\right)^T \int_1^{\xi_T} \cdots \int_1^{\xi_2} \left(\frac{2^{\rho T}}{\prod_{t=2}^T w_t}-1\right) dw_2\cdots dw_T\nonumber\\
&=\left(\frac{\sigma_n^2}{M}\right)^T (-1)^T\left(1 - 2^{\rho T} \sum_{t=0}^{T-1} \frac{(-1)^{t}}{t!} \log^t(2^{\rho T})\right),
\end{align}
which follows by evaluating the $(T-1)$-fold integral and after some trivial algebraic manipulations.

Now, consider the case $T > M$. In fact, assume that $T \to \infty$. In this case, we have
\begin{align}
&\lim_{T\to\infty} \frac{1}{T} \sum_{t=1}^T \log_2\left(1+\gamma_t\right)\nonumber\\
&= \E_{\bm{\phi}}\left\{\log_2\left(1+\frac{P}{\sigma_n^2} \Bigg\lvert \sum_{i=1}^M h_i g_i \exp(\jmath\phi_i)\Bigg\rvert^2\right)\right\},
\end{align}
where $\bm{\phi} = [\phi_1 ~ \phi_2 ~ \cdots ~ \phi_M]$. Moreover, at the high SNR regime, we have \cite{SHENG3}
\begin{align}
&\PP\left\{\E_{\bm{\phi}}\left\{\log_2\left(1+\frac{P}{\sigma_n^2} \Bigg\lvert \sum_{i=1}^M h_i g_i \exp(\jmath\phi_i)\Bigg\rvert^2\right)\right\} < \rho\right\}\nonumber\\
&\doteq \PP\left\{\log_2\left(1+\frac{P}{\sigma_n^2} \sum_{i=1}^M |h_i|^2 |g_i|^2\right) < \rho \right\},
\end{align}
where the relation $a \doteq b^c$ means $\lim_{b\to 0} \frac{\log a}{\log b} = c$ \cite{SHENG3}. This allows the use of $\PP\left\{\log_2\left(1+\frac{P}{\sigma_n^2} \sum_{i=1}^M |h_i|^2 |g_i|^2\right) < \rho \right\}$ to derive the scheme's diversity order. As such, we have
\begin{align}
&\PP\left\{\sum_{i=1}^M |h_i|^2 |g_i|^2 < (2^\rho-1)\frac{\sigma_n^2}{P} \right\}\nonumber\\
&= \frac{1}{\pi} \int_0^\infty \Im\left\{\phi(t) \sum_{i=M}^\infty (-1)^{i+1}\left(\jmath \frac{\theta\sigma_n^2}{P}\right)^i \frac{t^{i-1}}{i!}\right\} dt,
\end{align}
which follows from the Gil-Pelaez inversion theorem \cite{GP} and the Taylor series expansion of the exponential function (see Appendix \ref{propc_prf} for details); $\phi(t) = (\int_0^\infty\exp(\jmath t h) f(h) dh)^M$ is the characteristic function of $\sum_{i=1}^M |h_i|^2 |g_i|^2$ and $f(h) = 2K_0(2\sqrt{h})$ is the PDF of $|h|^2 |g|^2$. Then, for $P \to \infty$, we have
\begin{align}
&\lim_{P\to\infty} \frac{1}{\pi} \int_0^\infty \Im\left\{\phi(t) \sum_{i=M}^\infty (-1)^{i+1}\left(\jmath \frac{\theta\sigma_n^2}{P}\right)^i \frac{t^{i-1}}{i!}\right\} dt \nonumber\\
&\approx \frac{1}{\pi} \int_0^\infty \Im\left\{\phi(t) (-1)^{M+1}\left(\jmath \frac{\theta\sigma_n^2}{P}\right)^M \frac{t^{M-1}}{M!}\right\} dt\nonumber\\
&\to O(1/P^M),
\end{align}
which shows that the diversity order is $M$. 

\section{Proof of Proposition \ref{prop_rtd}}\label{prop_rtd_prf}
Given the SNR ordering $\gamma_{(1)} \geq \gamma_{(2)} \geq \dots \geq \gamma_{(N)}$, the expected rate of the highest received SNR $\gamma_{(1)}$ at the destination can be derived as
\begin{align}
\E_{\gamma_{(1)}}\{\log_2(1+\gamma_{(1)})\} &= \E_{H_1}\left\{\log_2\left(1+\frac{P}{\sigma_n^2}H_{(1)}\right)\right\}\nonumber\\
&=\int_0^\infty \log_2\left(1+\frac{P}{\sigma_n^2} h\right) p_{H_{(1)}}(h) dh,\label{er}
\end{align}
where $p_{H_{(1)}}$ is the probability distribution function (PDF) of the largest order statistic $H_{(1)}$ given by \cite{SLIM}
\begin{align}
p_{H_{(1)}}(h) = N F_H(h)^{N-1} f_H(h).
\end{align}
By replacing the above PDF in \eqref{er}, completes the proof.

\section{Proof of Proposition \ref{prop_atd}}\label{prop_atd_prf}
Two cases need to be considered, namely, $\rho \leq \psi$ and $\rho > \psi$. In the former case, outage occurs in the event of $N-1$ sub-surfaces not satisfying the selection criterion. As the events are mutually exclusive, we have
\begin{align}
\Pi_{\rm ATD}(\rho,\psi \,|\, \rho \leq \psi) &= \PP\{\log_2(1+\gamma_1) < \psi\}\cdots\nonumber\\
&\quad\times\PP\{\log_2(1+\gamma_{N-1}) < \psi\}\nonumber\\
&\quad\times\PP(\log_2(1+\gamma_N) < \rho).
\end{align}
In the other case, outage occurs when the $i$-th sub-surface, $1 \leq i \leq N-1$, satisfies the selection criterion but not the outage threshold or when $N-1$ sub-surfaces are not selected and the $N$-th is in outage. In mathematical terms,
\begin{align}
&\Pi_{\rm ATD}(\rho,\psi \,|\, \rho > \psi) = \PP\{\psi < \log_2(1\!+\!\gamma_1) < \rho\}\nonumber\\
&+ \PP\{\log_2(1\!+\!\gamma_1) < \psi\}\PP\{\psi < \log_2(1\!+\!\gamma_2) < \rho\}\nonumber\\
&+ \cdots + \PP\{\log_2(1+\gamma_1) < \psi\} \cdots\PP\{\log_2(1+\gamma_{N-2}) < \psi\}\nonumber\\
&\hspace{4cm}\times\PP\{\psi < \log_2(1+\gamma_{N-1}) < \rho\}\nonumber\\
&+\PP\{\log_2(1+\gamma_1) < \psi\} \cdots\PP\{\log_2(1+\gamma_{N-1}) < \psi\}\nonumber\\
&\hspace{4cm}\times\PP(\log_2(1+\gamma_N) < \rho).
\end{align}
By using Proposition \ref{prop1} for the outage probability of each event, the result follows.

\section{Diversity of Selection-based Schemes}\label{div_sbs}
Once again, we use the approximated expression in Theorem \ref{thm_rrc2} with $T=1$. For the TD scheme, we have
\begin{align}\label{td_app}
\Pi_{\rm TD}(\rho) \approx \left(\Pi_{\rm RRC}^{\rm CLT}(\rho,1)\right)^N = \left(1-\exp\left(-\frac{\theta\sigma_n^2}{MP}\right)\right)^N,
\end{align}
and so
\begin{align}
\lim\limits_{P\to\infty} \left(\Pi_{\rm RRC}^{\rm CLT}(\rho,1)\right)^N \approx \left(\frac{\theta\sigma_n^2}{MP}\right)^N \to O(1/P^N),
\end{align}
which follows from $\exp(-x) \approx 1-x$ for $x\approx 0$. Thus, the selection-based TD scheme achieves diversity order $N$, with a coding gain equal to \eqref{ag_td}.

\section{Proof of Theorem \ref{propc}}\label{propc_prf}

In the case of CT, the phases are aligned and so the channel gain is given by $\left(\sum_{i=1}^M \lvert h_i g_i \rvert\right)^2$. Now, the PDF for the product of two Rayleigh random variables $\lvert h \rvert$ and $\lvert g \rvert$ is given by $f_{\lvert h g \rvert}(x) = \frac{4 x}{\sigma^2} K_0\left(\frac{2x}{\sigma}\right)$, with $\sigma^2=\sigma_h^2\sigma_g^2$ \cite{CP}. By taking into account the errors induced by the channel estimation, the source estimates $\widehat{h g}$ with PDF
\begin{align}\label{ct_pdf}
f_{\lvert \widehat{h g} \rvert}(x) = \frac{4 x}{\sigma^2(1-\sigma_e^2)} K_0\left(\frac{2x}{\sqrt{\sigma^2(1-\sigma_e^2)}}\right),
\end{align}
where $\sigma_e^2$ captures the channel estimation accuracy of the cascaded channel \cite{GOLD}; we consider $\sigma_e^2 = 1/(1+\tau P_o/M)$ \cite{HASS}, where $P_o$ is the transmit power of the pilot. It follows that the SNR is given by
\begin{align}\label{snr_ct}
\gamma_{\rm CT} = \frac{P}{\sigma_n^2 + \sigma_e^2} \Bigg(\sum_{i=1}^M \lvert \widehat{h_i g_i} \rvert\Bigg)^2.
\end{align}
Then, the outage probability can be evaluated as
\begin{align}
\Pi_{\rm CT}(\rho) &= \PP\left\{\Bigg(\sum_{i=1}^M \lvert \widehat{h_i g_i} \rvert\Bigg)^2 < \frac{\theta(\sigma_n^2+\sigma_e^2)}{P}\right\}\nonumber\\
&= \PP\left\{\sum_{i=1}^M \lvert \widehat{h_i g_i} \rvert < \sqrt{\frac{\theta(\sigma_n^2+\sigma_e^2)}{P}}\right\}.
\end{align}
Let $s = \sigma^2(1-\sigma_e^2)$. Then, by using the above PDF, the characteristic function $\phi_{\lvert \widehat{h g} \rvert}(t)$ of $\lvert \widehat{h g} \rvert$ can be evaluated as
\begin{align}
\phi_{\lvert \widehat{h g} \rvert}(t) &= \E_{\lvert \widehat{h g} \rvert}\{\exp(\jmath t \lvert \widehat{h g} \rvert)\}\nonumber\\
&= \frac{4}{s} \int_0^\infty \exp(\jmath t x) x K_0\left(\frac{2x}{\sqrt{s}}\right) dx\nonumber\\
&= \frac{4\sqrt{s t^2+4} + 2\jmath \sqrt{s} t\left(\pi + 2\jmath \sinh^{-1}\left(\frac{\sqrt{s} t}{2}\right)\right)}{\left(st^2+4\right)^{3/2}},\label{cf}
\end{align}
which follows with the help of \cite[6.624-1]{ISG} and the fact that $\log(\jmath) = \jmath\pi/2$ and $\log(x+\sqrt{x^2+1}) = \sinh^{-1}(x)$ \cite{ISG}, where $\sinh(\cdot)$ is the hyperbolic sine function.

\begin{figure}[t]\centering
	\includegraphics[width=\linewidth]{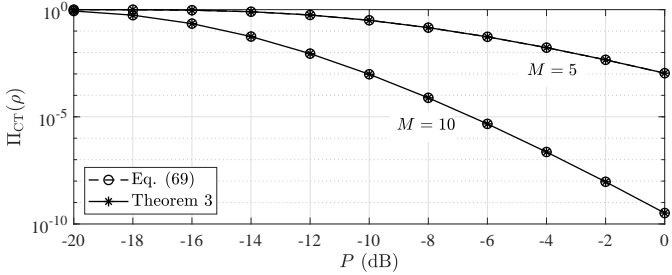}
	\caption{$\Pi_{\rm CT}(\rho)$ versus $P$ with $\theta = 1$, $\sigma_n^2 = 0$ dB; Theorem \ref{propc} agrees with Eq. \eqref{ctsum}, which validates our approach.}\label{fig9}
\end{figure}
	
Using the Gil-Pelaez inversion theorem \cite{GP}, we can obtain $\Pi_{\rm CT}(\rho)$ as follows
\begin{align}\label{pct}
\Pi_{\rm CT}(\rho) = \frac{1}{2} - \frac{1}{\pi}\int_0^\infty\!\!\Im\Bigg\{\!\!\exp\left(-\jmath t\sqrt{\frac{\theta(\sigma_n^2+\sigma_e^2)}{P}}\right) \phi(t)\!\Bigg\}\frac{dt}{t} ,
\end{align}
where $\phi(t)$ is the characteristic function of $\sum_{i=1}^M \lvert h_i \rvert \lvert g_i \rvert$, i.e. the sum of $M$ independent products of Rayleigh random variables. Hence, it follows that $\phi(t) = \phi_{\lvert \widehat{h g} \rvert}(t)^M$, where $\phi_{\lvert \widehat{h g} \rvert}(t)$ is given by \eqref{cf}. By applying a Taylor series expansion to the exponential function in \eqref{pct} and using the fact that $\int_0^\infty \frac{1}{t}\Im\{\phi(t)\} dt = \pi/2$, we have
\begin{align}\label{ctsum}
\Pi_{\rm CT}(\rho) = - \frac{1}{\pi} \int_0^\infty \Im\Bigg\{\sum_{i=1}^\infty\! \left(-\jmath t \sqrt{\frac{\theta(\sigma_n^2+\sigma_e^2)}{P}}\right)^i \frac{\phi(t)}{i!} \Bigg\} \frac{dt}{t}.
\end{align}
The final result follows by taking into account that the first $2M-1$ terms of the above sum are zero; this is validated by Fig. \ref{fig9}.

\end{document}